\documentclass[12pt,draftcls,journal,onecolumn]{IEEEtran}
%

\usepackage{amsfonts,color,morefloats,pslatex,graphicx,graphics}
\usepackage{amssymb,amsthm, amsmath,latexsym}

\newtheorem{theorem}{Theorem}
\newtheorem{lemma}[theorem]{Lemma}
\newtheorem{remark}[theorem]{Remark}

\newtheorem{example}[theorem]{Example}

\newcommand{\ord}{{\mathrm{ord}}}

\newcommand{\lcm}{{\mathrm{lcm}}}

\newcommand{\gf}{{\mathrm{GF}}}

\newcommand{\F}{{\mathbb{F}}}

\newcommand{\C}{{\mathcal{C}}}

\usepackage{blindtext}

\ifCLASSINFOpdf

\else

\fi

\hyphenation{op-tical net-works semi-conduc-tor}

\begin{document}
%
\title{The duals of narrow-sense BCH codes \\ with length $\frac{q^m-1}{\lambda}$
\thanks{
}
}
\author{Xiaoqiang Wang,\,\,\,\,\,\, Chengliang Xiao, \,\,\,\,\,\,Dabin Zheng{\thanks{Hubei Key Laboratory of Applied Mathematics, Faculty of Mathematics and Statistics, Hubei University, Wuhan 430062, China (Email: waxiqq@163.com, xichll@163.com, dzheng@hubu.edu.cn). The corresponding author is Dabin Zheng.
}}
}

%
%
%
%

\maketitle

\begin{abstract} BCH codes are an interesting class of cyclic codes due to their efficient encoding and decoding algorithms. In the past sixty years,
a lot of progress on the study
of BCH codes has been made, but little is known about the properties of their duals.  Recently, in order to study the duals of BCH codes and the lower bounds on their minimum distances, a new concept called dually-BCH code was proposed by authors in \cite{GDL21}.
In this paper, the lower bounds on the minimum distances of the duals of narrow-sense BCH codes with length  $\frac{q^m-1}{\lambda}$ over $\mathbb{F}_q$ are developed, where $\lambda$ is a positive integer satisfying $\lambda\, |\, q-1$, or $\lambda=q^s-1$ and $s\, |\,m$.
In addition, the sufficient and necessary conditions in terms of the designed distances for these codes being dually-BCH codes are presented. Many considered codes  in \cite{GDL21} and \cite{Wang23} are the special cases of the codes showed in this paper.
 Our lower bounds on the minimum distances of the duals of BCH codes include the bounds stated in \cite{GDL21} as a special case. Several examples show that the lower bounds are good in some cases.
\end{abstract}

\textbf{MSC 2000} \ \ 94B05; 94B15; 11T71

\textbf{Keywords} \ \  BCH code; dually-BCH code; minimum distance; lower bound; cyclotomic coset.

%
\IEEEpeerreviewmaketitle

\section{Introduction}\label{sec-auxiliary}

Let $q$ be a prime power and $\mathbb{F}_q$ be the finite field with $q$ elements.
Let $n$, $k$ be two positive integers such that $k\leq n$.  An $[n, k, d]$ code
$\mathcal{C}$ over the finite field $\mathbb{F}_q$ is a $k$-dimensional
linear subspace of $\mathbb{F}_q^n$ with minimum distance $d$. Let
$$\mathcal{C}^{\perp}=\{\mathbf{b} \in \mathbb{F}_q^n\,:\,\mathbf{b}\mathbf{c}^{T}=0 \,\,\text{for any $\mathbf{c} \in \mathcal{C}$}\}, $$
where $\mathbf{b}\mathbf{c}^{T}$ is the standard inner product of two vectors $\mathbf{b}$ and $\mathbf{c}$.
Then $\mathcal{C}^{\perp}$ is called {\it the dual} of $\mathcal{C}$. The code $\mathcal{C}$ is said to be {\it cyclic} if
$(c_0,c_1,  \ldots, c_{n-1}) \in \C$ implies $( c_{n-1}, c_0, c_1, \ldots, c_{n-2})
\in \C$. By identifying any vector $(c_0,c_1,\ldots,c_{n-1}) \in \mathbb{F}_q^n$ with
$$c_0+c_1x+c_2x^2+\ldots+c_{n-1}x^{n-1} \in \mathbb{F}_q[x]/(x^n-1),$$
the cyclic code $\C$ of length $n$ over $\mathbb{F}_q$ corresponds to an ideal of the quotient ring
$\mathbb{F}_q[x]/(x^n-1)$.
Since the ideal of $\mathbb{F}_q[x]/(x^n-1)$ is principal, $\C$ can be expressed as $\C=\langle g(x)\rangle$, where
$g(x)$ is a monic polynomial with the smallest degree and is called the {\it generator polynomial} of $\C$.

In this paper, we always assume that $\gcd(n,q)=1$. Let $\ell=\ord_{n}(q)$ be the order of $q$ modulo $n$
and $\alpha$ be a generator of the group $\mathbb{F}_{q^\ell}^*$. For any $i$ with $0\leq i\leq q^\ell-2$,
let $m_i(x)$ denote the {\it minimal polynomial} of $\beta^i$ over $\mathbb{F}_{q}$, where $\beta=\alpha^{(q^\ell-1)/n}$ is a primitive $n$-th root of unity. Let $\C_{(\delta,b)}$ be the cyclic code with generator polynomial
\begin{eqnarray*}
g_{(\delta,b)}(x)=\lcm \left(m_{b}(x), m_{b+1}(x), \cdots, m_{b+\delta-2}(x)\right),
\end{eqnarray*}
where $2\leq \delta\leq n$, $b$ is an integer and lcm denotes the least common multiple of these minimal polynomials. Then $\C_{(\delta,b)}$ is called a {\it BCH code} over $\F_q$ with length $n$ and {\it designed distance} $\delta$.
If $b=1$, the code $\C_{(\delta,1)}$ is called a {\it narrow-sense BCH code} and for convenience we abbreviate it to $\C_{\delta}$ in the sequel.  If the length of $\C_{(\delta,b)}$ is $q^m-1$, or $q^m+1$, or $\frac{q^m-1}{q-1}$, then $\C_{(\delta,b)}$ is called a {\it primitive BCH code}, or {\it antiprimitive BCH code}, or {\it projective BCH code}, respectively.

BCH codes are an error correction coding technique commonly used in data transmission and storage to improve reliability. Binary BCH codes were discovered by Bose, Ray-Chaudhuri, and  Hocquenghem around 1960 in \cite{Bose62,Hocquenghem59}, and were generalized to the case of all finite fields by Gorenstein and Zierler in 1961 \cite{Gorenstein61}. In the past several decades, BCH codes have been extensively studied. However, the parameters of BCH codes are known only for a few special cases. Among all types of BCH codes, narrow-sense primitive BCH codes are the most extensively
studied. The reader is referred to, for example, \cite{Aly07,Augot94,Charpin90,Ding15,Ding17,Lid17,Liu17, Yue15,Dianwu96} for information.
Antiprimitive BCH codes are another family of interesting codes. The reader is referred to \cite{Lid017,Li2019,Liu17,Yan2018}. For further information on BCH codes of the other lengths, the reader is referred to \cite{Li2017, Yan2018, Ling23,Zhu19}.

Until now, we have very limited knowledge about the minimum distances of the duals of BCH codes. In \cite{MacWilliams77}, the authors showed two lower bounds on the minimum distances of the dual codes of binary primitive BCH codes with odd designed distance. In \cite{Augot96}, the authors presented the lower bounds on the minimum distances of the duals of primitive BCH codes via the adaptation of the Weil bound to cyclic codes. In \cite{GDL21} and \cite{Wang23}, the authors obtained the lower bounds on the minimum distances of the duals of $\C_{\delta}$ for length $q^m-1$, $\frac{q^m-1}{q-1}$ and $\frac{q^m-1}{q+1}$ \ ($m$ even), respectively.

In general, the dual of a BCH code is not a BCH code. However, in some specific cases, the dual of a BCH code still is a BCH code. In order to further study the properties of the duals of BCH codes, the authors in \cite{GDL21} proposed a new concept, which is called {\it dually-BCH code}: A BCH code is called a dually-BCH code if both the BCH code and its dual are BCH codes with respect to an $n$-th primitive root of unity $\beta$.
As far as we know,  all the results about the dually-BCH codes are focused on narrow-sense BCH codes until now. In \cite{GDL21}, the authors obtained a sufficient and necessary condition for a narrow-sense BCH code $\C_{\delta}$  being a dually-BCH code, where the length of the considered code is $q^m-1$ or $\frac{q^m-1}{q-1}$ ($q=3$), and showed the condition for $\C_{\delta}$  being a dually-BCH code with length $\frac{q^m-1}{q-1}$ ($q\geq 4$) as an open problem. And then \cite{Wang23} solved this open problem. In addition,
the authors in \cite{Fan23} derived a sufficient and necessary condition for above BCH codes
being Hermitian dually-BCH codes.

Let $q$ be a prime power and $\lambda$ be a positive integer satisfying $\lambda\, |\, q-1$, or $\lambda=q^s-1$ and $s \, | \, m$. The main objective of this paper is to give several sufficient and necessary conditions in terms of the designed distances to ensure that  BCH codes with length $\frac{q^m-1}{\lambda}$ are dually-BCH codes, and develop the lower bounds on the minimum distances of the duals for these BCH codes. Our theorems generalize many results in \cite{GDL21} and \cite{Wang23} since the codes with lengths $q^m-1$ and $\frac{q^m-1}{q-1}$ are the special cases of our codes. To investigate the lower bounds on the minimum distances of the duals of the codes studied in this paper, we determined the true minimum distances of these dual codes for some special lengths by Magma, and we found that our lower bounds are tight for some cases.


The rest of this paper is organized as follows. Section II contains some preliminaries. Section III gives the sufficient and necessary conditions in terms of the designed distances to ensure that the BCH codes with length $\frac{q^m-1}{\lambda}$ are dually-BCH codes and develop the lower bounds on the minimum distances of the dual codes, where $\lambda$ is a positive integer satisfying $\lambda\, |\, q-1$, or $\lambda=q^s-1$ and $s \, | \, m$. Section IV concludes the paper.

\section{Preliminaries}\label{sec-auxiliary}

In this section, we present some
basic concepts and results, which will be empolyed later.

\subsection{Some notation and basic results on BCH codes}

Starting from now on, we adopt the following notation unless otherwise stated:
\begin{itemize}
\item $\mathbb{F}_q$ is the finite field with $q$ elements, where $q$ is a prime power.
\item $\alpha$ is a primitive element of $\mathbb{F}_{q^m}$ and $\beta=\alpha^{\frac{q^m-1}{n}}$ is a primitive $n$-th root of unity, where $n \, |\, q^m-1$.
\item $m_i(x)$ denotes the minimal polynomial of $\beta^i$ over $\mathbb{F}_q$.
\item $g_{(\delta,b)}(x)=\lcm \left(m_{b}(x), m_{b+1}(x), \ldots, m_{\delta+b-2}(x)\right)$ denotes the least common multiple of these minimal polynomials.
\item $\mathcal{C}_{\delta}$ denotes the BCH code with generator polynomial $g_{(\delta,1)}$ and length $n$.
\item $T=\{0\leq i\leq n-1: g_{(\delta,1)}(\beta^i)=0\}$ is the defining set of $\mathcal{C}_{\delta}$ with respect to $\beta$.
\item $T^{-1}=\{n-i\,:\,i\in T\}$.
\item $T^{\perp}$ is the defining set of the dual code $\mathcal{C}^{\perp}_{\delta}$ with respect to $\beta$.
\item ${\rm CL}(a)$ denotes the $q$-cyclotomic coset leader modulo $n$ containing $a$, where $a$ is a positive integer with $1\leq a< n$.
\item  $\mathbb{Z}_n$ denotes the ring of integers modulo $n$.
\end{itemize}

Let $s$ be an integer with $0\leq s<n$. The {\it $q$-cyclotomic coset} of $s$ modulo $n$ is defined by
$$ \mathbb{C}_{s}=\{s,sq,sq^2,\ldots,sq^{\ell_{s-1}}\}\,\, {\text\,\,mod \,\,n\subseteq \mathbb{Z}_n, }$$
where $\ell_s$ is the smallest positive integer such that $s\equiv sq^{\ell_s} \pmod n$, and is the size of the $q$-cyclotomic coset.
The smallest integer in $\mathbb{C}_{s}$ is called the {\it coset leader} of $\mathbb{C}_{s}$.

For a positive integer $i$ with $0<i<q^m-1$, the $q$-adic expansion of $i$ can be written as
 $$i=(i_{m-1},i_{m-2},\ldots,i_{0})_q.$$ Let $0\leq j\leq m-1$.
It is easily seen that the $q$-adic expansion of $iq^j\pmod {q^m-1}$ is
$$iq^j\,\,\, (\text{mod} \,\,\,q^m-1)= (i_{m-j-1},i_{m-j-2},\ldots, i_{m-j})_q,$$
which is called the {\it circular $j$-left-shift} of $(i_{m-1},i_{m-2},\ldots,i_0)_q$, where the subscript of each coordinate is regarded as an integer modulo $m$. Let $0<a,b\leq q^m-1$ be two positive integers with $q$-adic expansion $$a=(a_{m-1},a_{m-2},\ldots,a_0)_q\,\,\,\text{and}\,\,\,b=(b_{m-1},b_{m-2},\ldots,b_0)_q.$$
 We say that  \begin{equation}\label{eq1120}
(a_{m-1},a_{m-2},\ldots,a_0)_q\geq (b_{m-1},b_{m-2},\ldots,b_0)_q
\end{equation} if and only if there exists an integer $0\leq l \leq m-1$ such
that $a_l\geq b_l$ and $a_j=b_j$ for $ l+1\leq j \leq m-1$. Then  $a\geq b$ if and only if (\ref{eq1120}) holds.
With the preparations above, we have the following result, which can be derived directly.

\begin{lemma}\label{lem1b21}
Let $a,b$ be defined as above. Then the coset leader of  $\mathbb{C}_a$ modulo $q^m-1$ is greater than or equal to $b$ if and only if the circular $j$-left-shift of $(a_{m-1},a_{m-2},\ldots,a_0)$ is greater than or equal to
$(b_{m-1},b_{m-2},\ldots,b_0)$ for each $0 \leq j \leq m-1$.
\end{lemma}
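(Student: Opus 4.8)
The plan is to reduce the statement about the coset leader of $\mathbb{C}_a$ modulo $q^m-1$ to a pointwise comparison over all circular shifts, using the ordering on $q$-adic expansions introduced just before the lemma. Recall that by definition the $q$-cyclotomic coset $\mathbb{C}_a$ modulo $q^m-1$ is exactly the set of residues $aq^j \bmod (q^m-1)$ for $0\le j\le m-1$, and that the $q$-adic expansion of $aq^j \bmod (q^m-1)$ is precisely the circular $j$-left-shift of $(a_{m-1},\ldots,a_0)_q$. The coset leader is the smallest element of $\mathbb{C}_a$, i.e. the minimum of these $m$ shifts with respect to the usual integer order, which by the displayed equivalence just above the lemma coincides with the order $\ge$ defined in (\ref{eq1120}) on the corresponding $q$-adic expansions.

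First I would fix notation: write $a^{(j)}$ for the circular $j$-left-shift of $(a_{m-1},\ldots,a_0)_q$, viewed both as an $m$-tuple and as the integer it represents, so that $\mathbb{C}_a = \{a^{(0)}, a^{(1)}, \ldots, a^{(m-1)}\}$ and the coset leader is $\mathrm{CL}(a) = \min_{0\le j\le m-1} a^{(j)}$. The claim to be proved is then simply:
\begin{equation*}
\min_{0\le j\le m-1} a^{(j)} \ \ge\ b \qquad \Longleftrightarrow \qquad a^{(j)} \ge b \ \text{ for every } 0\le j\le m-1 .
\end{equation*}
This is an instance of the elementary fact that a minimum of finitely many quantities is $\ge b$ if and only if every one of those quantities is $\ge b$; the only thing that needs to be said is that the integer comparison "$a^{(j)}\ge b$" is the same as the lexicographic-type comparison in (\ref{eq1120}) applied to the $q$-adic digit strings, which is exactly what the paragraph preceding the lemma asserts. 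So the proof is: ($\Leftarrow$) if $a^{(j)}\ge b$ for all $j$ then in particular the smallest of them is $\ge b$; ($\Rightarrow$) if $\mathrm{CL}(a) = \min_j a^{(j)} \ge b$, then since each $a^{(j)} \ge \min_j a^{(j)}$, each $a^{(j)}\ge b$.

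There is essentially no obstacle here — this is the "can be derived directly" lemma the authors flagged. The only points requiring care, which I would state explicitly rather than belabor, are: (i) that the elements of $\mathbb{C}_a$ are genuinely the $m$ circular shifts (with possible repetitions when $\ell_a < m$, which does no harm since repetitions do not change a minimum), established by the formula for the $q$-adic expansion of $iq^j \bmod (q^m-1)$ recalled above; and (ii) that the order $\ge$ on $q$-adic expansions defined via (\ref{eq1120}) agrees with the order on the integers they represent, which is precisely the sentence "Then $a\ge b$ if and only if (\ref{eq1120}) holds" stated immediately before the lemma. Granting (i) and (ii), the equivalence is immediate from the characterization of the minimum, and the proof is complete.
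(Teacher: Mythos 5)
Your proposal is correct and is exactly the ``derived directly'' argument the paper intends: the paper states Lemma~\ref{lem1b21} without proof, and your reduction --- the elements of $\mathbb{C}_a$ are the $m$ circular left-shifts (repetitions being harmless), the coset leader is their minimum, and a minimum of finitely many integers is $\geq b$ iff each of them is, with integer comparison matching the digit-string comparison of (\ref{eq1120}) --- is precisely the omitted direct derivation. No gaps; the only caveat worth noting is that the correctness of the digit-wise order in (\ref{eq1120}) is inherited from the paper's preceding paragraph rather than something you need to reprove.
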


In the following, we will introduce some basic results about $q$-cyclotomic cosets. By the same way as \cite[Lemma 6]{Wang19}, we have the following results.

\begin{lemma}\label{lem:0913}
Let $0<t<q^m-1$ be a positive integer. Let $\mu$ be a common factor of $t$ and $q^m-1$, then $t$ is a coset leader modulo $q^m-1$ if and only if $\frac{t}{\mu}$ is a coset leader modulo $\frac{q^m-1}{\mu}$.
\end{lemma}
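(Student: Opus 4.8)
The plan is to show that passing from modulus $q^m-1$ to modulus $\frac{q^m-1}{\mu}$ simply rescales the $q$-cyclotomic coset of $t$ by the factor $\mu$, and then to use the elementary fact that multiplication by a positive integer preserves the order relation, hence preserves minimality. Since $\gcd(q,q^m-1)=1$, we also have $\gcd\!\left(q,\frac{q^m-1}{\mu}\right)=1$, so the $q$-cyclotomic cosets are well defined for both moduli, and the notions of coset leader make sense in each case.

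First I would record the key arithmetic observation: because $\mu \mid q^m-1$, the multiples of $\mu$ in $\mathbb{Z}_{q^m-1}$ are exactly $0,\mu,2\mu,\ldots,\left(\frac{q^m-1}{\mu}-1\right)\mu$, and for any integer $x$ with $\mu \mid x$ one has $x \bmod (q^m-1)=\mu\left((x/\mu)\bmod \tfrac{q^m-1}{\mu}\right)$; that is, dividing by $\mu$ commutes with reduction. Applying this with $x=t q^{j}$ (which is divisible by $\mu$ since $\mu\mid t$) gives, for every $j\ge 0$,
\[
t q^{j}\bmod (q^m-1)\;=\;\mu\left(\frac{t}{\mu}\,q^{j}\bmod \frac{q^m-1}{\mu}\right).
\]
Consequently $\mathbb{C}_{t}=\mu\cdot\mathbb{C}_{t/\mu}$ as subsets of $\mathbb{Z}$, where $\mathbb{C}_{t}$ is taken modulo $q^m-1$ and $\mathbb{C}_{t/\mu}$ modulo $\tfrac{q^m-1}{\mu}$; in particular the two cosets have the same cardinality. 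Since $\mu>0$, multiplication by $\mu$ is strictly increasing on the nonnegative integers, so $\min(\mu\cdot S)=\mu\cdot\min S$ for any finite nonempty $S\subseteq\mathbb{Z}_{\ge 0}$; taking $S=\mathbb{C}_{t/\mu}$ yields $\min\mathbb{C}_{t}=\mu\cdot\min\mathbb{C}_{t/\mu}$. Therefore $t=\min\mathbb{C}_{t}$ if and only if $\tfrac{t}{\mu}=\min\mathbb{C}_{t/\mu}$, which is exactly the assertion that $t$ is a coset leader modulo $q^m-1$ if and only if $\tfrac{t}{\mu}$ is a coset leader modulo $\tfrac{q^m-1}{\mu}$.

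The only step that requires any care — the ``main obstacle,'' though it is routine — is the verification that dividing by $\mu$ is compatible with reduction modulo $q^m-1$, which is where the hypothesis $\mu \mid q^m-1$ is used essentially; once that is in place, the rest follows from $\gcd(q,q^m-1)=1$ and the monotonicity of scaling by a positive integer. (One should also note the trivial range check $0<\tfrac{t}{\mu}<\tfrac{q^m-1}{\mu}$, which is immediate from $0<t<q^m-1$ and $\mu\mid t$, $\mu\mid q^m-1$.)
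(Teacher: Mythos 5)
Your proof is correct and complete: the identity $tq^{j}\bmod(q^m-1)=\mu\bigl(\tfrac{t}{\mu}q^{j}\bmod\tfrac{q^m-1}{\mu}\bigr)$ gives $\mathbb{C}_{t}=\mu\cdot\mathbb{C}_{t/\mu}$, and monotonicity of scaling by $\mu$ finishes the argument. The paper itself gives no proof of this lemma (it only points to \cite[Lemma 6]{Wang19}), and your argument is exactly the standard one that reference uses, so there is nothing to fault and no genuinely different route being taken.
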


\begin{lemma} \label{lem:0914}
Let $a$ and $b$ be positive integers in the same $q$-cyclotomic coset modulo $q^m-1$. If $\lambda \, | \, a$, $\lambda\, | \, b$ and $\lambda \, | \, q^m-1$, then $\frac{a}{\lambda}$ and $\frac{b}{\lambda}$ are in the same $q$-cyclotomic coset modulo $\frac{q^m-1}{\lambda}$.
\end{lemma}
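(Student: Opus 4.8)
The plan is to read off what it means for $a$ and $b$ to lie in the same $q$-cyclotomic coset modulo $q^m-1$ and then simply divide the resulting congruence by $\lambda$. By definition of the $q$-cyclotomic coset $\mathbb{C}_a$ modulo $q^m-1$, the hypothesis that $b\in\mathbb{C}_a$ means that there is a nonnegative integer $j$ with $b\equiv aq^{j}\pmod{q^m-1}$; equivalently, there is an integer $k$ such that
$$b = aq^{j} - k\,(q^m-1).$$

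Next I would divide this identity by $\lambda$. The point is that each of the three terms is divisible by $\lambda$: the left-hand side because $\lambda\mid b$, the term $aq^{j}$ because $\lambda\mid a$, and the term $k(q^m-1)$ because $\lambda\mid q^m-1$. Hence the identity descends to
$$\frac{b}{\lambda} = \frac{a}{\lambda}\,q^{j} - k\,\frac{q^m-1}{\lambda},$$
which is exactly the statement that $\frac{b}{\lambda}\equiv \frac{a}{\lambda}\,q^{j}\pmod{\frac{q^m-1}{\lambda}}$. Therefore $\frac{b}{\lambda}$ is obtained from $\frac{a}{\lambda}$ by multiplication by a power of $q$ modulo $\frac{q^m-1}{\lambda}$, i.e. $\frac{a}{\lambda}$ and $\frac{b}{\lambda}$ lie in the same $q$-cyclotomic coset modulo $\frac{q^m-1}{\lambda}$, as claimed.

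I do not expect a genuine obstacle here; the only thing that needs a little care is to make sure the very same shift exponent $j$ that witnesses membership in a common coset modulo $q^m-1$ also witnesses it modulo $\frac{q^m-1}{\lambda}$, and the displayed identity makes this transparent. It may also be worth noting that this lemma is the natural companion of Lemma~\ref{lem:0913}: both are instances of the same elementary manipulation, namely that a congruence all of whose terms are divisible by $\lambda$ may be divided through by $\lambda$, with the modulus scaling accordingly.
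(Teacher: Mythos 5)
Your proof is correct and follows essentially the same route as the paper: both take the congruence $aq^j\equiv b\pmod{q^m-1}$ witnessing common coset membership and divide it through by $\lambda$ to obtain the corresponding congruence modulo $\frac{q^m-1}{\lambda}$. Your version merely spells out the division step via the explicit integer identity, which the paper leaves implicit.
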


\begin{proof}
Since $a$ and $b$ in the same $q$-cyclotomic coset modulo $q^m-1$, there exists a positive integer $i$ such that
$$aq^i\equiv b \pmod {q^m-1}.$$
Then,
$$\frac{aq^i}{\lambda}\equiv \frac{b}{\lambda} \pmod {\frac{q^m-1}{\lambda}}.$$
Hence, $\frac{a}{\lambda}$ and $\frac{b}{\lambda}$ in the same $q$-cyclotomic coset modulo $\frac{q^m-1}{\lambda}$.
The proof is then completed.
\end{proof}

\begin{lemma}\label{lem:qm1q1}\cite{Ding17}
Let $q$ be a prime power. The first three largest $q$-cyclotomic coset leaders modulo $q^m-1$ are as follows:
$$\delta_1=(q-1)q^{m-1}-1,\,\,\delta_2= (q-1)q^{m-1}-q^{\lfloor\frac{m-1}{2}\rfloor}-1,\,\,\delta_3= (q-1)q^{m-1}-q^{\lfloor\frac{m+1}{2}\rfloor}-1.$$
\end{lemma}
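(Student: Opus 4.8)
The plan is to recast ``$a$ is a $q$-cyclotomic coset leader modulo $q^m-1$'' as a purely combinatorial statement about the $q$-adic digit string of $a$, and then to read off the three largest such strings. Write $a=(a_{m-1},\ldots,a_0)_q$ with $1\le a\le q^m-2$. By Lemma~\ref{lem1b21} (equivalently, straight from the definition of the circular shift), $a$ is a coset leader if and only if $(a_{m-1},\ldots,a_0)$ is $\ge$, in the sense of~(\ref{eq1120}), to each of its $m$ circular left-shifts, i.e. it is rotation-minimal. The first step is a ``leading-digit'' reduction: any coset leader $a<q^m-1$ has $a_{m-1}\le q-2$, since $a_{m-1}=q-1$ together with $a\ne q^m-1$ forces some $a_i<q-1$, and the shift bringing that $a_i$ to the top is strictly smaller. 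Hence every coset leader is $\le(q-2)q^{m-1}+(q^{m-1}-1)=\delta_1$, and $\delta_1=(q-2,q-1,\ldots,q-1)_q$ is itself a coset leader because all its proper shifts begin with $q-1$. So $\delta_1$ is the largest coset leader.

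Next I would study coset leaders $a$ with $a_{m-1}=q-2$ and $a<\delta_1$. The same shift argument forces every digit of $a$ to be $\ge q-2$, so all digits lie in $\{q-2,q-1\}$, and $a<\delta_1$ forces at least two digits to equal $q-2$. Encoding $q-2\mapsto 0$ and $q-1\mapsto 1$, such $a$ correspond bijectively to rotation-minimal binary words of length $m$ that begin with $0$ and contain at least two $0$'s; on these words (\ref{eq1120}) is just comparison as binary integers, most significant bit first. A rotation-minimal binary word with exactly two $0$'s must look like ``a $0$, then $a$ ones, then a $0$, then $b$ ones'' with $a+b=m-2$, and is rotation-minimal exactly when $a\le b$; its value strictly increases with $a$. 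Hence the two largest words in this family occur at $a=\lfloor(m-2)/2\rfloor$ and $a=\lfloor(m-2)/2\rfloor-1$; a one-line parity check ($\lfloor(m-1)/2\rfloor=m-2-\lfloor(m-2)/2\rfloor$ and $\lfloor(m+1)/2\rfloor=\lfloor(m-1)/2\rfloor+1$) identifies them, after translating back, as $\delta_1-q^{\lfloor(m-1)/2\rfloor}=\delta_2$ and $\delta_1-q^{\lfloor(m+1)/2\rfloor}=\delta_3$. One then verifies directly that $\delta_2$ and $\delta_3$ are themselves coset leaders, by checking only the one or two circular shifts that again begin with a $q-2$ digit.

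It remains to show nothing else competes, i.e. that every coset leader other than $\delta_1,\delta_2,\delta_3$ is strictly below $\delta_3$. Two families are left: coset leaders with $a_{m-1}\le q-3$, which are $\le(q-2)q^{m-1}-1$; and coset leaders with $a_{m-1}=q-2$ whose binary code word has at least three $0$'s. For the first one checks $(q-2)q^{m-1}-1<\delta_3$, i.e. $q^{\lfloor(m+1)/2\rfloor}<q^{m-1}$. For the second, rotation-minimality forces the leading run of $1$'s to be no longer than the average run length, which is $\le(m-3)/3$, so the value again falls below $\delta_3$. Combining the four cases gives $\delta_1>\delta_2>\delta_3$ as the three largest coset leaders. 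The main obstacle is making the two estimates of this last paragraph hold uniformly: both inequalities require $m$ not to be too small, so the statement is to be read in that range and the finitely many small values of $m$ must be handled by direct computation (indeed for very small $m$ the three largest leaders need not be $\delta_1,\delta_2,\delta_3$: for $q=3$, $m=3$ the value $(q^m-1)/2=13$ is a coset leader lying strictly between $\delta_3=8$ and $\delta_2=14$).
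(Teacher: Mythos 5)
The paper offers no proof of this lemma at all: it is quoted from \cite{Ding17}, so there is nothing internal to compare against. Your self-contained argument via rotation-minimal digit strings is the standard route and is essentially sound: the leading-digit reduction, the restriction to words over $\{q-2,q-1\}$, the classification of rotation-minimal words $01^a01^b$ with $a\le b$, and the identification of the two largest such words with $\delta_2$ and $\delta_3$ are all correct. Most importantly, you are right that the statement as printed is false without a lower bound on $m$: your counterexample $q=3$, $m=3$ is valid ($13=(1,1,1)_3$ is fixed by multiplication by $3$ modulo $26$, hence a coset leader, and $\delta_3=8<13<14=\delta_2$), and more generally for every $q\ge3$ and $m=3$ the word $(q-2,q-2,q-2)_q=(q-2)\frac{q^3-1}{q-1}=\delta_2-1$ is a coset leader strictly between $\delta_3$ and $\delta_2$, while for $q=2$, $m=3$ one even gets $\delta_3<0$. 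The hypothesis $m\ge4$ from the source should be restored; it is satisfied wherever the lemma is actually invoked in this paper, but Lemma~\ref{lem:qm1q102}, which is derived from it, inherits the same omission.

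One step of your final paragraph is too quick. For a leader whose code word has $k\ge3$ zeros, the run-length bound gives $r_1\le\lfloor(m-k)/k\rfloor\le\lfloor(m-3)/3\rfloor$, and one checks $\lfloor(m-3)/3\rfloor\le\lfloor(m-4)/2\rfloor$, the length of the leading run of ones in $\delta_3$; but equality occurs for $m\in\{4,5,6,7,9\}$, so the bound alone does not force the value below $\delta_3$. The gap closes uniformly for all $m\ge4$ with one extra observation rather than case-by-case computation: if the leading runs coincide, the two words agree on the prefix $01^{a}0$, after which $\delta_3$ consists entirely of ones while the competing word still contains at least one further zero ($k\ge3$), so it is strictly smaller. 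With that line added, and with $m\ge4$ imposed, your proof is complete.
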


\begin{lemma}\label{lem:qm1q101}\cite{Wang23}
Let $q\geq 3$ be a prime power and $m\geq 4$ be an integer. Then,
$\delta_1=q^{m-1}-1-\frac{(\sum_{t=1}^{q-2}q^{\lceil\frac{mt}{q-1}-1\rceil}-q+2)}{q-1}$
is the first largest $q$-cyclotomic coset leader modulo $\frac{q^m-1}{q-1}$.
\end{lemma}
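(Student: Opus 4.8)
The plan is to reduce, via Lemma~\ref{lem:0913}, to a statement about cyclotomic cosets modulo $q^m-1$. Take $\mu=q-1$, which divides $q^m-1$. For any $s$ with $1\le s<\frac{q^m-1}{q-1}$ the integer $t=(q-1)s$ satisfies $\mu\mid t$ and $\mu\mid q^m-1$, so by Lemma~\ref{lem:0913} the residue $s$ is a coset leader modulo $\frac{q^m-1}{q-1}$ if and only if $(q-1)s$ is a coset leader modulo $q^m-1$. Since $s\mapsto(q-1)s$ is an order-preserving bijection from $\{1,\dots,\frac{q^m-1}{q-1}-1\}$ onto the set of multiples of $q-1$ in $[q-1,q^m-1)$, the largest coset leader modulo $\frac{q^m-1}{q-1}$ equals $\frac{1}{q-1}$ times the largest coset leader modulo $q^m-1$ that is divisible by $q-1$. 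Hence it suffices to prove that this latter number is
$$a^{*}:=(q-1)q^{m-1}-1-\sum_{t=1}^{q-2}q^{\lceil mt/(q-1)\rceil-1};$$
since $\sum_{t=1}^{q-2}q^{\lceil mt/(q-1)\rceil-1}\equiv q-2\equiv-1\pmod{q-1}$, the integer $a^{*}$ is divisible by $q-1$, and dividing by $q-1$ and simplifying turns this formula into the claimed $\delta_1$. A short estimate also shows $a^{*}/(q-1)<\frac{q^m-1}{q-1}$ and $a^{*}>q^{m-1}-1$.

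Next I would pass to ``defect strings''. For $0<a<q^m-1$ write $a=q^m-1-D$ with $q$-adic expansions $a=(a_{m-1},\dots,a_0)_q$ and $D=(d_{m-1},\dots,d_0)_q$, so that $d_i=q-1-a_i$. A circular $j$-left-shift of $(a_{m-1},\dots,a_0)_q$ is $\ge(a_{m-1},\dots,a_0)_q$ in the order of~(\ref{eq1120}) precisely when the same shift of $(d_{m-1},\dots,d_0)_q$ is $\le(d_{m-1},\dots,d_0)_q$; hence $a$ is a coset leader modulo $q^m-1$ if and only if the length-$m$ string $(d_{m-1},\dots,d_0)_q$ is lexicographically maximal among its $m$ circular shifts (in particular $d_{m-1}=\max_i d_i$). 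Moreover $q-1\mid a$ if and only if $\sum_i d_i\equiv0\pmod{q-1}$, and $0<a<q^m-1$ forces $\sum_i d_i$ to be a positive multiple of $q-1$, hence $\sum_i d_i\ge q-1$. Maximizing $a$ is therefore equivalent to minimizing the integer $D$ over all length-$m$ base-$q$ strings that are lexicographically maximal among their circular shifts and whose digit sum is a positive multiple of $q-1$.

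I would then solve this extremal problem in two steps. First, the minimizer has digit sum exactly $q-1$: if $\sum_i d_i\ge 2(q-1)$, then after fixing the leading digit $d_{m-1}$ (the maximal digit) and placing the remaining ``defect mass'' as low as the lexicographic-maximality constraint permits, comparing the positions of the first few nonzero digits shows the value exceeds that of the digit-sum-$(q-1)$ optimum. Second, among strings of digit sum exactly $q-1$ satisfying the shift condition, minimizing $D$ wants the $q-1$ units of defect as low as possible, but the shift condition (equivalently, the cyclic pattern of gaps between consecutive defect positions must be its own lexicographically smallest rotation) forbids clumping them; balancing the gaps, a direct computation shows the defects must occupy the positions $\lceil mt/(q-1)\rceil-1$ for $t=1,\dots,q-1$, counted with multiplicity, the value $t=q-1$ giving $m-1$. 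This yields $D_{\min}=q^{m-1}+\sum_{t=1}^{q-2}q^{\lceil mt/(q-1)\rceil-1}$, so $a^{*}=q^m-1-D_{\min}$; Lemma~\ref{lem1b21} is then invoked to confirm directly that this $a^{*}$ is indeed a coset leader modulo $q^m-1$.

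The main obstacle is this extremal analysis: one must prove simultaneously that the ``balanced'' defect pattern is admissible (no circular shift of the corresponding string is strictly smaller) and that it is the unique minimizer of $D$, ruling out every other admissible gap arrangement. Extra care is needed when $m$ is small relative to $q$ (in particular when $m<q-1$), where several exponents $\lceil mt/(q-1)\rceil-1$ coincide and the defects stack to height greater than $1$ at a common position: one must check that the resulting digits remain in $[0,q-1]$ and that $a^{*}$ still exceeds $q^{m-1}-1$, so that any competing coset leader has leading digit at least $1$. The hypotheses $q\ge3$ and $m\ge4$ are used to exclude small exceptional configurations.
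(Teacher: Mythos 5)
The paper does not prove Lemma~\ref{lem:qm1q101} at all: it is imported verbatim from \cite{Wang23}, so there is no internal argument to measure yours against. Judged on its own, your setup is correct and efficient: Lemma~\ref{lem:0913} with $\mu=q-1$ does reduce the problem to finding the largest coset leader modulo $q^m-1$ that is divisible by $q-1$; your target $a^*$ equals $(q-1)\delta_1$ and is divisible by $q-1$ as you compute; and the passage to the complementary string $D=q^m-1-a$ correctly recasts the problem as minimizing $D$ over length-$m$ base-$q$ strings that are lexicographically maximal among their rotations and whose digit sum is a positive multiple of $q-1$. I checked the resulting value against the stated formula for several $(q,m)$, including a case with stacked defects ($q=7$, $m=4$), and it is consistent.

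The gap is that essentially all of the difficulty of the lemma lives in the extremal analysis, and you only describe it. Three claims carry the proof and none is established. (i) That the minimizing $D$ has digit sum exactly $q-1$: ``comparing the positions of the first few nonzero digits'' is not an argument, since for digit sum $2(q-1)$ with $m\ge 2(q-1)$ the leading digit can still be $1$ and one must genuinely compare two balanced configurations term by term. (ii) That among admissible strings of digit sum $q-1$ the balanced placement at positions $\lceil mt/(q-1)\rceil-1$ minimizes $D$: the gap-sequence criterion you invoke (``the cyclic pattern of gaps must be its own lexicographically smallest rotation'') is a characterization of lexicographic maximality only for binary defect strings, i.e.\ when the $q-1$ defects occupy distinct positions; it breaks down exactly in the stacked case $m<q-1$ that you flag but do not resolve, where the digits of $D$ exceed $1$. (iii) That the balanced pattern is itself admissible, i.e.\ no rotation of it is lexicographically larger: this is a nontrivial property of balanced (Sturmian-type) necklaces and requires checking all $m$ shifts via Lemma~\ref{lem1b21}, not a one-line confirmation. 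As written, the proposal identifies the right answer and the right reformulation, but the part of the argument that would actually constitute the proof is missing.
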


From Lemma \ref{lem:0913} and Lemma \ref{lem:qm1q1}, the following result can be derived directly.

\begin{lemma}\label{lem:qm1q102}
Let $\delta_1$ and $\delta_2$ be the first two largest $q$-cyclotomic coset leaders modulo $\frac{q^m-1}{2}$, then
$$\delta_1= \frac {(q-1)q^{m-1}-q^{\lfloor\frac{m-1}{2}\rfloor}-1}{2},\,\,\delta_2= \frac {(q-1)q^{m-1}-q^{\lfloor\frac{m+1}{2}\rfloor}-1}{2}.$$
\end{lemma}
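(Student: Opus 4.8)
The plan is to exploit the correspondence, furnished by Lemma \ref{lem:0913}, between coset leaders modulo $\frac{q^m-1}{2}$ and \emph{even} coset leaders modulo $q^m-1$. Note first that $\frac{q^m-1}{2}$ is an integer only when $q$ is odd, so throughout the argument $q-1$ is even while $q^{m-1}$ and every power $q^j$ is odd; this parity information will do all the work.

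Write $N=\frac{q^m-1}{2}$. I would first verify that $t\mapsto 2t$ is an order-preserving bijection from the set of coset leaders modulo $N$ onto the set of even coset leaders modulo $q^m-1$. Indeed, if $1\le t<N$ then $1\le 2t<q^m-1$, and since $2\mid 2t$ and $2\mid q^m-1$, Lemma \ref{lem:0913} (with $\mu=2$) says that $2t$ is a coset leader modulo $q^m-1$ precisely when $t$ is a coset leader modulo $N$; conversely, every even coset leader $s$ modulo $q^m-1$ has the form $s=2t$ with $t=s/2$ a coset leader modulo $N$ in the range $1\le t<N$. The map obviously preserves order, so the $i$-th largest coset leader modulo $N$ equals one half of the $i$-th largest \emph{even} coset leader modulo $q^m-1$.

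Next I would invoke Lemma \ref{lem:qm1q1}: the three largest coset leaders modulo $q^m-1$ are $\delta_1=(q-1)q^{m-1}-1$, $\delta_2=(q-1)q^{m-1}-q^{\lfloor\frac{m-1}{2}\rfloor}-1$ and $\delta_3=(q-1)q^{m-1}-q^{\lfloor\frac{m+1}{2}\rfloor}-1$. A parity check using that $q$ is odd shows $(q-1)q^{m-1}$ is even, so $\delta_1$ is odd, whereas $q^{\lfloor\frac{m-1}{2}\rfloor}+1$ and $q^{\lfloor\frac{m+1}{2}\rfloor}+1$ are even, whence $\delta_2$ and $\delta_3$ are both even. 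Since $\delta_2>\delta_3$ are the second and third largest coset leaders modulo $q^m-1$ and the only larger coset leader, $\delta_1$, is odd, the two largest \emph{even} coset leaders modulo $q^m-1$ are exactly $\delta_2$ and $\delta_3$. Combining this with the bijection of the previous paragraph shows that the two largest coset leaders modulo $N$ are $\delta_2/2$ and $\delta_3/2$, which are precisely the two expressions in the statement.

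The argument is essentially bookkeeping, and the only place needing genuine care is the parity analysis together with the observation that no even coset leader modulo $q^m-1$ can slip strictly between $\delta_3$ and $\delta_1$ — there are no coset leaders there at all except $\delta_2$ — so that it is the \emph{ranking} in Lemma \ref{lem:qm1q1}, and not just the values, that is being used. One should also record the implicit hypothesis that $m$ is large enough for Lemma \ref{lem:qm1q1} to apply, so that $\delta_1,\delta_2,\delta_3$ really are the three largest coset leaders modulo $q^m-1$.
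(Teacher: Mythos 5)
Your proposal is correct and follows exactly the route the paper intends: the paper states that Lemma \ref{lem:qm1q102} "can be derived directly" from Lemma \ref{lem:0913} and Lemma \ref{lem:qm1q1}, and your argument—using Lemma \ref{lem:0913} with $\mu=2$ to identify coset leaders modulo $\frac{q^m-1}{2}$ with even coset leaders modulo $q^m-1$, then observing that $\delta_1$ is odd while $\delta_2,\delta_3$ are even—is precisely that derivation, spelled out. The parity bookkeeping and the remark that the ranking (not just the values) in Lemma \ref{lem:qm1q1} is what is being used are exactly the right points of care.
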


%
%


%
%
%


\subsection{Some konwn bounds on the minimum distance of the duals of BCH codes}
 Charpin in \cite{Charpin98} pointed out that it is a well-known hard problem to determine the minimum distances of BCH codes in general. In fact, it is also a very difficult problem to determine the minimum distances of the duals of BCH codes. The following are two known lower bounds on the minimum distance of cyclic codes~\cite{MacWilliams77}.

\begin{lemma}(Carlitz-Uchiyama Bound)
Let $\mathcal{C}$ be a binary BCH code of length $2^m-1$ with designed distance $2s+1$. Then, the minimum distance of its dual satisfies
$$d^{\perp}\geq 2^{m-1}-(s-1)2^{\frac{m}{2}}.$$
\end{lemma}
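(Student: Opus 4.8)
The plan is to realize $\mathcal{C}^{\perp}$ via the trace function and then invoke the Weil bound on additive character sums over $\F_{2^m}$ (this is precisely the estimate after which the lemma is named). Put $n=2^m-1$, let $\alpha$ be a primitive element of $\F_{2^m}$, and let $\Tr\colon\F_{2^m}\to\F_2$ be the absolute trace. The narrow-sense primitive BCH code $\mathcal{C}=\mathcal{C}_{2s+1}$ of length $n$ has defining set $T=\bigcup_{i=1}^{2s}\mathbb{C}_i$, so $H=\big(\alpha^{ij}\big)_{1\le i\le 2s,\,0\le j\le n-1}$ is a parity-check matrix of $\mathcal{C}$ over $\F_{2^m}$; expanding its entries over $\F_2$ and taking $\F_2$-combinations of the rows shows that every codeword of $\mathcal{C}^{\perp}$ equals $\big(\Tr(F(\alpha^{j}))\big)_{0\le j\le n-1}$ for some $F(x)=\sum_{i=1}^{2s}a_ix^{i}$ with $a_i\in\F_{2^m}$. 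Since $\Tr(y^{2})=\Tr(y)$ we have $\Tr(a_{2i}x^{2i})=\Tr(a_{2i}^{1/2}x^{i})$, so each even-exponent term can be folded into the odd-exponent term in its cyclotomic coset; hence every nonzero codeword of $\mathcal{C}^{\perp}$ can be written as $\big(\Tr(f(\alpha^{j}))\big)_{j}$ for some nonzero
\[
f(x)=\sum_{\substack{1\le i\le 2s-1\\ i\ \mathrm{odd}}}b_ix^{i},\qquad b_i\in\F_{2^m},
\]
and, all exponents of $f$ being odd, the degree $d:=\deg f$ is odd with $1\le d\le 2s-1$.

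Next I would convert the Hamming weight into a character sum. As $j$ runs over $0,\dots,n-1$, $\alpha^{j}$ runs over $\F_{2^m}^{\ast}$, so the weight of the codeword attached to $f$ is $\#\{x\in\F_{2^m}^{\ast}:\Tr(f(x))\ne 0\}$. Writing $S(f)=\sum_{x\in\F_{2^m}}(-1)^{\Tr(f(x))}$, using $\#\{x\in\F_{2^m}:\Tr(f(x))=0\}=\tfrac12\big(2^m+S(f)\big)$, and noting that $f(0)=0$ puts $x=0$ in the zero set, an elementary count gives
\[
\wt\!\big((\Tr(f(\alpha^{j})))_{j}\big)=2^{m-1}-\tfrac12 S(f).
\]
Since $d$ is odd, $\gcd(d,2)=1$ and $f$ is not of the form $g^{2}+g+c$, so the Weil (Carlitz--Uchiyama) bound for one-variable additive character sums over $\F_{2^m}$ yields $|S(f)|\le (d-1)2^{m/2}\le (2s-2)2^{m/2}$. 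Combining the two displays, every nonzero codeword of $\mathcal{C}^{\perp}$ has weight at least $2^{m-1}-\tfrac12(2s-2)2^{m/2}=2^{m-1}-(s-1)2^{m/2}$, which is the assertion.

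The one non-elementary ingredient is the estimate on $S(f)$, which I would quote as a known theorem (the Weil bound for exponential sums over a finite field); its hypotheses hold automatically here because a nonzero polynomial supported on odd exponents has odd degree, hence is coprime to $2$ in degree and not of the form $g^{2}+g+c$. The remaining steps — the trace description of $\mathcal{C}^{\perp}$, the reduction to odd exponents via $\Tr(y^{2})=\Tr(y)$, and the weight-to-character-sum identity — are routine, and I would record them with just the computations indicated above.
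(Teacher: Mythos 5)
The paper gives no proof of this lemma; it is quoted as a classical result from \cite{MacWilliams77}. Your argument --- the Delsarte/trace description of $\mathcal{C}^{\perp}$, the reduction via $\Tr(y^{2})=\Tr(y)$ to a nonzero polynomial supported on odd exponents (hence of odd degree at most $2s-1$, so the Weil hypothesis holds), the identity $\wt = 2^{m-1}-\tfrac12 S(f)$, and the Weil bound $|S(f)|\le(\deg f-1)2^{m/2}$ --- is precisely the classical proof given in that reference, and it is correct.
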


\begin{lemma}(Sidel'nikov Bound)
Let $\mathcal{C}$ be a binary BCH code of length $2^m-1$ with designed distance $2s+1$. Then, the minimum distance of its dual satisfies
$$d^{\perp}\geq 2^{{m-1}-\lfloor log_2(2s-1)\rfloor}.$$
\end{lemma}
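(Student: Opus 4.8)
This is the classical Sidel'nikov bound; the argument I would give realizes $\C_{2s+1}^{\perp}$ as a subcode of a punctured Reed--Muller code. Write $n=2^{m}-1$, let $\alpha$ be a primitive element of $\F_{2^{m}}$ (here $\alpha=\beta$), and set $r=\lfloor\log_{2}(2s-1)\rfloor$, so $2^{r}\le 2s-1<2^{r+1}$ and hence $2s\le 2^{r+1}$. For $0\le j\le n-1$ write $w_{2}(j)$ for the number of $1$'s in the binary expansion of $j$; it is constant on $2$-cyclotomic cosets modulo $n$, and $w_{2}(n-j)=m-w_{2}(j)$ for $1\le j\le n-1$ since $n-j$ is the bitwise complement of $j$. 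The plan has three steps: reduce to a designed distance that is one more than a power of two; show that the dual of the reduced code lies inside a punctured Reed--Muller code; and read off the distance, upgrading the bound by $1$ with a parity argument.

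First, since $2s+1\le 2^{r+1}+1=:\delta'$, we have $g_{(2s+1,1)}\mid g_{(\delta',1)}$, hence $\C_{\delta'}\subseteq\C_{2s+1}$ and so $\C_{2s+1}^{\perp}\subseteq\C_{\delta'}^{\perp}$; thus it suffices to prove $d(\C_{\delta'}^{\perp})\ge 2^{m-1-r}$. The defining set of $\C_{\delta'}$ is $T'=\mathbb{C}_{1}\cup\cdots\cup\mathbb{C}_{2^{r+1}}$, and every $k$ with $1\le k\le 2^{r+1}$ has $w_{2}(k)\le r+1$; hence $w_{2}(i)\le r+1$ for every $i\in T'$. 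Therefore each element of ${T'}^{-1}=\{n-i:i\in T'\}$ has $2$-weight at least $m-r-1$, and since ${T'}^{-1}$ is precisely the exponent set of the nonzeros of $\C_{\delta'}^{\perp}$ (its defining set being $\Z_{n}\setminus {T'}^{-1}$), the element $\beta^{j}$ is a zero of $\C_{\delta'}^{\perp}$ whenever $w_{2}(j)\le m-r-2$, and also when $j=0$ (as $0\notin T'$). So $\C_{\delta'}^{\perp}$ is contained in the cyclic code $\mathcal{E}$ with zero set $\{j\in\Z_{n}:w_{2}(j)\le m-r-2\}$. By the standard cyclic description of punctured Reed--Muller codes, the code $\RM(r+1,m)^{*}$, which has length $n$, has zero set $\{j:1\le j\le n-1,\ w_{2}(j)\le m-(r+1)-1\}$, which is exactly the zero set of $\mathcal{E}$ with $j=0$ removed; hence $\mathcal{E}$ is the even-weight subcode of $\RM(r+1,m)^{*}$, and in particular $\C_{\delta'}^{\perp}\subseteq\mathcal{E}\subseteq\RM(r+1,m)^{*}$.

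Now $\RM(r+1,m)$ has minimum distance $2^{m-r-1}$, and puncturing one coordinate lowers the minimum distance by at most $1$, so $d(\RM(r+1,m)^{*})\ge 2^{m-r-1}-1$; thus every nonzero codeword of $\C_{\delta'}^{\perp}$ has weight at least $2^{m-r-1}-1$. Moreover $0$ lies in the zero set of $\mathcal{E}$, equivalently $\bone\in\C_{\delta'}$, so all codewords of $\C_{\delta'}^{\perp}$ have even weight; an even integer that is $\ge 2^{m-r-1}-1$ is $\ge 2^{m-r-1}$ (assuming $m\ge r+2$, the bound being trivial otherwise since then $2^{m-1-r}\le 1$). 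Combining with the first step, $d^{\perp}=d(\C_{2s+1}^{\perp})\ge d(\C_{\delta'}^{\perp})\ge 2^{m-1-r}$.

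The step I expect to be the main obstacle is the middle one: one must check carefully that replacing a narrow-sense BCH defining set by its reciprocal ${T'}^{-1}$ converts the small-$2$-weight bound $w_{2}\le r+1$ into the large-$2$-weight bound $w_{2}\ge m-r-1$, and then align this with the correct cyclic description of $\RM(r+1,m)^{*}$ --- it is easy to be off by one in the Reed--Muller order or in the direction of the inequality. It is also worth noting that the clean bound $2^{m-1-r}$ (rather than $2^{m-1-r}-1$) relies genuinely on the parity constraint, since the Reed--Muller estimate alone yields only $2^{m-r-1}-1$. Everything else is elementary manipulation of cyclotomic cosets together with the known value $2^{m-\rho}$ of the minimum distance of $\RM(\rho,m)$.
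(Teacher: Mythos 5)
The paper does not prove this lemma: it is quoted verbatim from \cite{MacWilliams77} as a known result, so there is no internal proof to compare against. Your argument is the classical one and, as far as I can check, it is correct: the reduction to $\delta'=2^{r+1}+1$ via code containment is sound; $w_2$ is indeed constant on $2$-cyclotomic cosets and $w_2(n-j)=m-w_2(j)$ because $n=2^m-1$ is the all-ones word, so $T'^{-1}$ sits inside the high-weight ball and the defining set $\Z_n\setminus T'^{-1}$ of $\C_{\delta'}^{\perp}$ contains $\{j: w_2(j)\le m-r-2\}\cup\{0\}$; the cyclic description of $\RM(r+1,m)^*$ you use (zero set $\{j:1\le j\le n-1,\ w_2(j)\le m-r-2\}$) is the standard one, so the containment $\C_{\delta'}^{\perp}\subseteq \RM(r+1,m)^*$ holds; and the even-weight upgrade from $2^{m-r-1}-1$ to $2^{m-r-1}$ is legitimate because $0\in T^{\perp}$ always (equivalently $\bone\in\C_{\delta'}$), forcing every dual codeword to have even weight. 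You correctly flag the degenerate cases $m\le r+1$ (where the claimed bound is at most $1$ and hence vacuous, and where $\C_{\delta'}$ need not even be defined). Two cosmetic points: the containment you actually need is only $T'\subseteq\{j:w_2(j)\le r+1\}$, not an exact description of $T'$, and your argument is equivalent to the textbook chain $\RM(m-r-2,m)^*\subseteq\C_{\delta'}$ followed by dualizing; either phrasing is fine.
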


In last year, the authors in \cite{GDL21} investigated the lower bounds on the minimum distances of the duals of $\C_{\delta}$ for length $q^m-1$ and $\frac{q^m-1}{q-1}$.
From \cite[Lemmas 15, 20 and Theorem 28]{GDL21} and BCH bound, we have the following results.

\begin{lemma}\label{lem:8}
Let $d^{\perp}(\delta)$ be the minimum distance of $\mathcal{C}^{\perp}_{\delta}$ with length $q^m-1$ over $\mathbb{F}_q$, where $m\geq 3$. Then,
\begin{equation*}
\begin{aligned}
d^{\perp}(\delta)\geq \begin{cases}
2^{m-t}, &{\rm if}\,\,\, 2^t\leq \delta < 2^{t+1}\,\,(2\leq t \leq m-3),\\
4, &{\rm if}\,\,\, 2^{m-2}\leq \delta< 2^{m-1}-2^{\lfloor\frac{m-1}{2}\rfloor}
               \end{cases}
\end{aligned}
\end{equation*}
if $q=2$, and
\begin{equation*}
\begin{aligned}
d^{\perp}(\delta)\geq\begin{cases}
q^{m-t}-a+1, &{\rm if}\,\, aq^t\leq \delta \leq (a+1)q^t-1\,\,(1\leq t \leq m-2,\,\,1\leq a<q-1),\\
q^{m-t}-q+2, &{\rm if}\,\, (q-1)q^t\leq \delta \leq q^{t+1}-q+1
\,\, (1\leq t \leq m-2),\\
q-a+1, &{\rm if}\,\,  aq^{m-1}\leq \delta \leq (a+1)q^{m-1}-1\,\,(1\leq a < q-2),\\
3, &{\rm if}\,\,  (q-2)q^{m-1}\leq \delta < (q-1)q^{m-1}-q^{\lfloor \frac{m-1}{2}\rfloor},\\
(b+1)q^{m-t},  &{\rm if}\,\,  \delta=q^t-b\,\,(1\leq t\leq m-1, 1\leq b\leq q-2, q^t-b\geq 3)
               \end{cases}
\end{aligned}
\end{equation*}
if $q\geq 3$.
\end{lemma}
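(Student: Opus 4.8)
The plan is to realize $\mathcal{C}^{\perp}_{\delta}$ as a cyclic code and apply the BCH bound to an explicitly chosen run of consecutive elements in its defining set. Write $n=q^{m}-1$. Since $\gcd(n,q)=1$, the code $\mathcal{C}_{\delta}$ has defining set $T=\bigcup_{i=1}^{\delta-1}\mathbb{C}_{i}$ with respect to $\beta$, and its dual $\mathcal{C}^{\perp}_{\delta}$ is cyclic with defining set $T^{\perp}=\mathbb{Z}_{n}\setminus T^{-1}$, where $T^{-1}=\{n-i:i\in T\}=\bigcup_{i=1}^{\delta-1}\mathbb{C}_{n-i}$. Hence, for each range of $\delta$ in the statement it suffices to exhibit a set of consecutive integers modulo $n$, of size one less than the claimed lower bound, contained in $T^{\perp}$; the BCH bound then gives the inequality. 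Equivalently, one must find a ``gap'' of that length in $T^{-1}$.

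Next I translate membership in $T^{\perp}$ into a condition on $q$-adic digits. For $z=(z_{m-1},\ldots,z_{0})_{q}$ the residue $n-z$ has $q$-adic expansion $(q-1-z_{m-1},\ldots,q-1-z_{0})_{q}$, and $\mathbb{C}_{n-z}$ is the set of its circular shifts; hence, by Lemma~\ref{lem1b21}, a nonzero $z$ lies in $T^{\perp}$ if and only if the coset leader of $\mathbb{C}_{n-z}$ is at least $\delta$, i.e.\ every circular shift of the digit-complement of $z$ represents an integer that is $\ge\delta$ in the order defined in~(\ref{eq1120}); and $0\in T^{\perp}$ always. Lemma~\ref{lem:qm1q1} pins down the three largest coset leaders $\delta_{1},\delta_{2},\delta_{3}$ modulo $n$, which is exactly what controls the top portion of $T^{-1}$ and produces the constant bounds ($3$ for $q\ge3$, and $4$ for $q=2$) at the upper edge of the range, as in \cite[Lemmas~15 and~20]{GDL21}.

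The core step is then, window by window, the construction of the run. For $aq^{t}\le\delta\le(a+1)q^{t}-1$ with $1\le t\le m-2$ and $1\le a<q-1$ I would take the interval $\{0,1,\ldots,q^{m-t}-a-1\}$: for every $z$ in it the digit-complement of $z$ has its top $t$ digits equal to $q-1$, so its longest run of zeros has length at most $m-t-1$, and the (essentially unique) circular shift realizing such a run carries a digit $\ge a$ in position $t$; consequently every circular shift represents a value $\ge(a+1)q^{t}-1>\delta-1$, the whole interval lies in $T^{\perp}$, and the BCH bound gives $d^{\perp}(\delta)\ge q^{m-t}-a+1$. The windows with $t=m-1$, and the window $(q-1)q^{t}\le\delta\le q^{t+1}-q+1$ (the case $a=q-1$, using $\delta-1\le q^{t+1}-q<q^{t+1}-1$), are handled the same way. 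The exceptional window $\delta=q^{t}-b$ is the one where a cruder observation suffices: every $z$ whose digit-complement has no run of $m-t$ consecutive zeros already lies in $T^{\perp}$, and a short count shows this leaves a gap of length $(b+1)q^{m-t}-1$ in $T^{-1}$, giving the larger bound $(b+1)q^{m-t}$. Finally, $\mathcal{C}_{\delta}\subseteq\mathcal{C}_{\delta'}$ for $\delta\ge\delta'$ gives $d^{\perp}(\delta)\le d^{\perp}(\delta')$, so the endpoints of adjacent windows reduce to one another. (Alternatively, all of these statements are already contained in \cite[Lemmas~15, 20 and Theorem~28]{GDL21}, so one may simply quote them and append the BCH bound.)

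The only real obstacle is the bookkeeping of the case analysis: for every window of $\delta$ one must pin down exactly which circular shift of a digit-complement could fall below $\delta$ and verify that it does not for any $z$ in the chosen interval. The exceptional window $\delta=q^{t}-b$, whose bound $(b+1)q^{m-t}$ is far larger than those of its neighbours, together with the boundaries between windows (for instance $\delta=q^{t+1}-q+1$ versus $\delta=q^{t+1}$) and the parity-dependent floors in $\delta_{2}$ and $\delta_{3}$, is where this bookkeeping concentrates; none of it is conceptually deep.
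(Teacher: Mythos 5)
Your proposal is correct and matches the paper's treatment: the paper does not prove this lemma at all, but imports it directly from \cite[Lemmas 15, 20 and Theorem 28]{GDL21} together with the BCH bound, exactly as your closing parenthetical suggests. Your sketched argument (exhibiting a maximal run $\{0,1,\ldots,I(\delta)-1\}\subseteq T^{\perp}$ via digit/coset-leader analysis and then applying the BCH bound) is also precisely the technique of \cite{GDL21} and the one this paper reuses for its own new bounds in Lemmas \ref{lem:2225} and \ref{eq:lem919}.
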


\begin{lemma}\label{lem:7}
Let $d^{\perp}(\delta)$ be the minimum distance of $\mathcal{C}^{\perp}_{\delta}$ with length $\frac{q^m-1}{q-1}$ over $\mathbb{F}_q$, where $m\geq 3$. Then,
\begin{equation*}
\begin{aligned}
d^{\perp}(\delta)\geq\begin{cases}
\frac{q^{m-t}-1}{q-1}+1, &{\rm if}\,\, \frac{q^t-1}{q-1}< \delta \leq \frac{q^{t+1}-1}{q-1}\,\,(1\leq t \leq m-2),\\
2, &{\rm if}\,\, \frac{q^{m-1}-1}{q-1}< \delta< \frac{q^m-1}{q-1}.
               \end{cases}
\end{aligned}
\end{equation*}
\end{lemma}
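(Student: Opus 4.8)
The plan is to realize the code $\mathcal{C}_\delta$ with length $n=\frac{q^m-1}{q-1}$ inside a primitive BCH code with length $q^m-1$, and then transport the known bound from Lemma \ref{lem:8} (the case $q\geq 3$) across this correspondence. The starting point is the standard observation that the defining set $T$ of $\mathcal{C}_\delta$ with respect to the $n$-th root of unity $\beta=\alpha^{q-1}$ corresponds, under multiplication by $\lambda=q-1$, to a union of $q$-cyclotomic cosets modulo $q^m-1$; by Lemma \ref{lem:0914}, a cyclotomic coset modulo $q^m-1$ that is divisible by $q-1$ pushes down to a cyclotomic coset modulo $n$, and conversely Lemma \ref{lem:0913} controls which residues modulo $q^m-1$ are coset leaders. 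So the first step is to pin down, for a given $\delta$ in one of the two stated ranges, exactly which cyclotomic cosets modulo $n$ lie in $T$, equivalently which integers in $\{1,\dots,\frac{q^t-1}{q-1}\}$ (resp. $\{1,\dots,\frac{q^{m-1}-1}{q-1}\}$) occur, and how they lift to cosets modulo $q^m-1$ lying in the defining set of a primitive narrow-sense BCH code of a suitable designed distance.

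Next I would compute the defining set $T^\perp$ of the dual. Using $T^\perp = \Z_n \setminus T^{-1}$ with $T^{-1}=\{\,n-i : i\in T\,\}$, the task reduces to a BCH-bound estimate: I want to exhibit a long run of consecutive integers (a "gap") that is disjoint from $T^{-1}$, so that $\mathcal{C}^\perp_\delta$ contains a BCH code whose designed distance is that run-length plus one. Concretely, for $\frac{q^t-1}{q-1} < \delta \leq \frac{q^{t+1}-1}{q-1}$ the claim $d^\perp(\delta)\geq \frac{q^{m-t}-1}{q-1}+1$ says there should be $\frac{q^{m-t}-1}{q-1}$ consecutive elements of $\Z_n$ absent from $T^{-1}$; the natural candidate is an interval whose lifts modulo $q^m-1$ (after multiplying by $q-1$) avoid all cyclotomic cosets with leader $\leq \delta-1$. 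I would identify this interval explicitly — something like $\{n - \frac{q^{m-t}-1}{q-1}+1,\dots, n-1, 0\}$ shifted appropriately — and verify via the $q$-adic / circular-shift description preceding Lemma \ref{lem1b21} that none of these elements, nor any circular shift of the corresponding lift, is $\leq \delta-1$. The second range, $\frac{q^{m-1}-1}{q-1} < \delta < \frac{q^m-1}{q-1}$, giving only $d^\perp(\delta)\geq 2$, is the trivial statement that $\mathcal{C}^\perp_\delta \neq \{\bzero\}$, which follows as long as $T\neq \Z_n$, i.e. the all-one codeword's coordinate functional is nonzero — here I only need $0\notin T$ or a single missing coset, which is immediate since $\delta < n$ forces $T\subsetneq \Z_n$.

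The main obstacle will be the bookkeeping in the first step: controlling precisely which cyclotomic cosets modulo $n=\frac{q^m-1}{q-1}$ appear as leaders below a given bound, and showing the targeted gap interval in $\Z_n\setminus T^{-1}$ really has the claimed length. Because $n$ does not divide $q^t-1$ in a clean way, the lift-and-push argument between modulus $q^m-1$ and modulus $n$ has to be done carefully, keeping track of carries in the $q$-adic expansions; the circular-shift criterion of Lemma \ref{lem1b21} is the right tool, but applying it to the whole interval at once requires checking that the "worst" shift of the smallest element of the interval still exceeds $\delta-1$. I expect that once the structure of coset leaders modulo $n$ up to $\frac{q^{t+1}-1}{q-1}$ is understood (essentially: the integers $1,2,\dots,\frac{q^{t+1}-1}{q-1}$ whose $q$-adic expansion is "already a leader"), the BCH-bound gap falls out, and the bound matches the primitive case of Lemma \ref{lem:8} divided through by $q-1$, which is the sanity check I would use throughout.
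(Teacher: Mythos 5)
Your strategy coincides with the paper's. The paper itself quotes Lemma \ref{lem:7} from \cite{GDL21}, but its own proof of the generalization --- Lemma \ref{lem:2225} combined with the BCH bound to give Theorem \ref{thm:1002}, which reduces to Lemma \ref{lem:7} at $s=1$ --- is exactly your second step: exhibit a run of consecutive integers in $T^{\perp}=\mathbb{Z}_n\setminus T^{-1}$, lift to modulus $q^m-1$ via Lemma \ref{lem:0913}, and control coset leaders with $q$-adic expansions and Lemma \ref{lem1b21}. (Your opening idea of ``transporting'' Lemma \ref{lem:8} down from length $q^m-1$ plays no role in the actual argument; the bound is proved from scratch at length $n$.) Two concrete problems remain. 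First, the interval you name, $\{n-\frac{q^{m-t}-1}{q-1}+1,\dots,n-1,0\}$, fails immediately: since $1,\dots,\delta-1\in T$, the elements $n-1,\dots,n-\delta+1$ lie in $T^{-1}$, so $n-1\notin T^{\perp}$. The run has to sit at the other end, namely $\{0,1,\dots,\frac{q^{m-t}-1}{q-1}-1\}$, where membership of $i$ in $T^{\perp}$ amounts to the coset leader of $n-i$ exceeding $\delta-1$.

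Second, the step you set aside as ``bookkeeping'' is the whole proof. Writing $i=\frac{q^{m-t}-1}{q-1}-u$ with $1\le u\le\frac{q^{m-t}-1}{q-1}-1$, one finds $n-i\in\mathbb{C}_{\frac{q^t-1+(q-1)uq^t}{q-1}}$, and what must be shown is that the coset leader of $q^t-1+(q-1)uq^t$ modulo $q^m-1$ is at least $q^{t+1}-1$ for every such $u$. This is the paper's Lemma \ref{eq:lemt} (at $s=1$), and its proof is not routine: one assumes the leader has $m-t-1$ consecutive zeros in its $q$-adic expansion and derives a contradiction from a congruence modulo $q-1$, forcing either $u=0$ or the leader to equal $q^{t+1}-1$. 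Your proposal does not supply this estimate, so the claimed gap length --- and hence the bound $\frac{q^{m-t}-1}{q-1}+1$ --- is not established. The second range of $\delta$ is fine: $0\in T^{\perp}$ plus the BCH bound gives $d^{\perp}\ge 2$ (note this is what $d^{\perp}\ge 2$ requires, not merely $\mathcal{C}_{\delta}^{\perp}\neq\{\mathbf{0}\}$).
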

For more information on the minimum distances of the duals of BCH codes, the reader is referred to \cite{Augot96,Wang23}.

\section{The duals of BCH codes with length $\frac{q^m-1}{\lambda}$ }\label{sec-plus}

In this section, we always assume that $n=\frac{q^m-1}{\lambda}$ is the length of the considered codes, where $m\geq 3$ and $\lambda$ is a positive integer satisfying $\lambda\, |\, q-1$,
or $\lambda=q^s-1$ and $s \, | \, m$. Let $\C_{\delta}$ be a narrow-sense BCH code over $\mathbb{F}_q$ with designed distance $\delta$. By definition, we know that the defining set of $\C_{\delta}$ with respect to $\beta$ is  $T=\mathbb{C}_1\cup \mathbb{C}_2 \cup \cdots \cup \mathbb{C}_{\delta-1}$, where $2\leq \delta\leq n$. Let $T^{\perp}$ be the defining set of the dual code $\mathcal{C}^{\perp}_{\delta}$ with respect to $\beta$. It is easy to check that $T^{\perp}=\mathbb{Z}_n \setminus T^{-1}$ and $0 \in \ T^{\perp}$.  In the following, our task is to give a lower bound on the minimum distance of the dual code $\mathcal{C}^{\perp}_{\delta}$ and derive a sufficient and necessary condition for $\mathcal{C}_{\delta}$ being a dually-BCH code. As will be seen later, the results for the cases $\lambda\, |\, q-1$, or $\lambda=q^s-1$ and $s \, | \, m$ are distinct. Hence, the proof should be treated separately.

\subsection{The case $\lambda=q^s-1$ and $s \, | \, m$ }

If $s=m$, the length of the code is $1$ and there is nothing to prove. If $m=2s$,  the length of the code is $q^s+1$ and we will obtain a trivial lower bound on the minimum distance of the dual code $\mathcal{C}^{\perp}_{\delta}$ by using our way. Hence, we always assume that $\frac{m}{s}\geq 3$ in the following.
We first prove the following lemma.

\begin{lemma}\label{eq:lemt}
Let $t$ and $u$ be positive integers such that $1\leq t \leq \frac{m}{s}-2$ and $1\leq u\leq  \frac{q^{m-ts}-1}{q^s-1}-1$. Then the coset leader of $\mathbb{C}_{q^{ts}-1+(q^s-1)uq^{ts}}$ modulo $q^m-1$ is greater than or equal to $q^{ts+s}-1$.
\end{lemma}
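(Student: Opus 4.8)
The statement concerns the $q$-adic expansion of $N := q^{ts}-1+(q^s-1)uq^{ts}$ modulo $q^m-1$, and asks to show that every cyclic shift of this expansion is $\geq q^{ts+s}-1$ (in the lexicographic-from-the-top order defined above Lemma~\ref{lem1b21}), which by that lemma is exactly the claim that the coset leader is $\geq q^{ts+s}-1$. So the plan is: first write down $N$ in base $q$ explicitly; then apply Lemma~\ref{lem1b21}, checking each of the $m$ circular shifts against the fixed vector representing $q^{ts+s}-1$.

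**The base-$q$ expansion.**

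First I would unpack the pieces. The term $q^{ts}-1$ occupies the bottom $ts$ digits, all equal to $q-1$: it is $(0,\dots,0,\underbrace{q-1,\dots,q-1}_{ts})_q$. Next, $u$ satisfies $1\le u\le \frac{q^{m-ts}-1}{q^s-1}-1$, so $(q^s-1)u$ lies strictly between $0$ and $q^{m-ts}-1$; hence $(q^s-1)u$ has a base-$q$ expansion of length at most $m-ts$, and multiplying by $q^{ts}$ shifts it up by $ts$ places, landing it entirely in digit positions $ts,\dots,m-1$ with no overlap with the bottom block. Therefore $N$ itself (already $< q^m-1$, which should be checked but is immediate from the bounds on $t,u$) has base-$q$ expansion
$$
N = \bigl(\,c_{m-ts-1},\dots,c_0,\ \underbrace{q-1,\dots,q-1}_{ts}\,\bigr)_q,
$$
where $(c_{m-ts-1},\dots,c_0)_q = (q^s-1)u$. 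The key structural fact I want to extract is that, because $q^s-1 \mid (q^s-1)u$, the integer $(q^s-1)u$ is $\equiv 0 \bmod q^s-1$, and more usefully: writing $(q^s-1)u = q^s\cdot u - u$, if $u$ has base-$q$ expansion of length $\le m-ts-s$... actually the cleaner route is to note $(q^s-1)u \ge q^s - 1$ (since $u\ge 1$), so its expansion is nonzero, and its bottom $s$ digits, being $\equiv (q^s-1)u \equiv -u \bmod q^s$, together with a "carry" structure, force the expansion not to start (from the top, within its length-$(m-ts)$ window) with too many zeros. I expect the honest statement I need is: the top digit $c_{m-ts-1}$ of $(q^s-1)u$ in its natural length — or rather, across the full width — interacts with the shifts so that no shift drops below $q^{ts+s}-1 = (0,\dots,0,\underbrace{q-1,\dots,q-1}_{ts+s})_q$.

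**Checking the shifts — the main obstacle.**

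Now apply Lemma~\ref{lem1b21} with $a = N$ and $b = q^{ts+s}-1$. The comparison vector for $b$ is zeros in the top $m-ts-s$ positions and $q-1$ in the bottom $ts+s$ positions. For a circular $j$-left-shift of $N$'s digit string to be $\geq b$, I need: either its top digit is positive (then it wins immediately, since $b$'s top digit is $0$, provided $m-ts-s\ge 1$, which holds as $t\le \frac{m}{s}-2$), or its top several digits agree with $b$'s leading zeros and then it must be $\ge q-1,\dots,q-1$ in the last $ts+s$ slots — i.e. those last $ts+s$ digits are all $q-1$. I would split the $m$ shifts into groups according to where the bottom block of $ts$ consecutive $(q-1)$'s lands after shifting, and where the nonzero digits of $(q^s-1)u$ land. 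The shifts that are genuinely dangerous are those that bring a run of zeros (coming from the zero digits interspersed in $(q^s-1)u$, or between blocks) to the top of the window; I must show that whenever the top $m-ts-s$ digits of a shift are all zero, the remaining $ts+s$ digits are forced to be all $q-1$. This is where the arithmetic constraint $\lambda = q^s-1$ (so that the "extra" mass is a multiple of $q^s-1$) does the work: a block of $m-ts-s$ zeros at the top of a cyclic shift means $N$'s digit string contains a cyclic run of $\ge m-ts-s$ zeros, and I must rule out, using $(q^s-1)\mid (q^s-1)u$ and $1\le u \le \frac{q^{m-ts}-1}{q^s-1}-1$, any configuration where that run is long enough to push a non-$(q-1)$ digit into the bottom $ts+s$ window. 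I anticipate this case analysis — bounding the length of the longest zero-run in the base-$q$ expansion of $(q^s-1)u$ and tracking it through all cyclic rotations — to be the technical heart of the argument; everything else (the explicit expansion, the reduction via Lemma~\ref{lem1b21}, the easy shifts where a $q-1$ sits on top) is routine. I would organize it as: (i) the shifts keeping the whole bottom $(q-1)$-block intact and below the top — automatically $\ge b$; (ii) shifts that rotate into the $(q^s-1)u$ part — here I invoke that $(q^s-1)u \ge q^s-1$ forces a $q-1$ appearing high enough, using the identity $(q^s-1)u = (u-1)q^s + (q^s - (u \bmod q^s))$-type carry bookkeeping restricted by the upper bound on $u$; (iii) the genuinely mixed shifts, handled by the zero-run length bound. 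Combining (i)–(iii) over all $j$ gives the lemma.
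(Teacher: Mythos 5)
Your setup is correct and coincides with the paper's: expand $N=q^{ts}-1+(q^s-1)uq^{ts}$ in base $q$ (bottom $ts$ digits all $q-1$, top block carrying $(q^s-1)u$), reduce via Lemma~\ref{lem1b21}, and observe that the only dangerous cyclic shifts are those placing a run of $m-ts-s$ zeros at the top, for which you must show the remaining $ts+s$ digits are all $q-1$. But you never prove that claim --- you explicitly defer it (``I anticipate this case analysis \dots to be the technical heart'') --- and the tool you propose for it, bounding the length of the longest zero-run in the expansion of $(q^s-1)u$, cannot work on its own: runs of length exactly $m-ts-s$ genuinely occur (e.g.\ $u=1$ gives $N=q^{ts+s}-1$, whose string is a rotation of the comparison vector), so no run-length bound can close the argument. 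What is needed is the dichotomy: a cyclic zero-run of length $m-ts-s$ forces the $s$ remaining free digits to be either all $0$ or all $q-1$.

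That dichotomy is exactly the step the paper supplies and your proposal lacks. If the zero run sits in the top window with $l$ digits above it and $s-l$ digits between it and the bottom $(q-1)$-block, the $s$ free digits $a_{m-1},\dots,a_{m-l},a_{ts+s-l-1},\dots,a_{ts}$ satisfy $\sum_j a_j q^j=(q^s-1)uq^{ts}\equiv 0\pmod{q^s-1}$; since $s\mid m$, the exponents $m-1,\dots,m-l,\;ts+s-l-1,\dots,ts$ reduce modulo $s$ to the $s$ distinct residues $s-1,\dots,0$, so the reduced weighted sum is an integer in $[0,q^s-1]$ divisible by $q^s-1$, hence $0$ (all $a_j=0$, contradicting $u\ge 1$) or $q^s-1$ (all $a_j=q-1$, whence the coset leader equals $q^{ts+s}-1$, which is allowed). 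Without this modular argument, or an equivalent replacement, your proof is incomplete at its central step.
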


\begin{proof}
Let A be  the coset leader of $\mathbb{C}_{q^{ts}-1+(q^s-1)uq^{ts}}$ modulo $q^m-1$. Obviously, there is nothing to prove if $A= q^{ts+s}-1$. In the following, we assume that  $A\neq q^{ts+s}-1$. Note that \begin{equation}\label{eq:10010101}
q^{ts+s}-1 = (\underbrace{0,0,\ldots,0,}_{m-ts-s}\underbrace{q-1,q-1,\ldots,q-1}_{ts+s})_q.
\end{equation}
From Lemma \ref{lem1b21},
 there exist at least $m-ts-s$ consecutive zeros in the $q$-adic expansion of $A$ if $A$ is less than $q^{ts+s}-1$.
Hence, in order to obtain the desired result, we only need to prove that there do not exist $m-ts-s$ consecutive zeros in the $q$-adic expansion of $A$.


It is easy to check that the $q$-adic expansion of $q^{ts}-1+(q^s-1)uq^{ts}$ has the form
\begin{equation}\label{eq:100101}
\begin{split}
&(q^s-1)uq^{ts}+q^{ts}-1\\
&=(\underbrace{a_{m-1},a_{m-2},\ldots,a_{m-l},}_{l} \underbrace{0,0,\ldots,0,}_{m-ts-s}\underbrace{a_{ts+s-l-1},a_{ts+s-l-2},\ldots,a_{ts},}_{s-l}\underbrace{q-1,q-1,\ldots,q-1}_{ts})_q
\end{split}
\end{equation}
if there exist $m-ts-s$ consecutive zeros in the $q$-adic expansion of $A$, where $l$ is a nonnegative integer and $l\leq s$. Then
\begin{equation}\label{eq:1001}
(q^s-1)uq^{ts}=a_{m-1}q^{m-1}+\cdots+a_{m-l}q^{m-l}+a_{ts+s-l-1}q^{ts+s-l-1}+\cdots+a_{ts}q^{ts},
\end{equation}
which implies that
\begin{equation}\label{eq:sss}
a_{m-1}q^{m-1}+\cdots+a_{m-l}q^{m-l}+a_{ts+s-l-1}q^{ts+s-l-1}+\cdots+a_{ts}q^{ts}\equiv 0 \pmod {q^s-1}.
\end{equation}
It is easy to see that
\begin{equation}\label{eq:sss01}
\begin{split}
&a_{m-1}q^{m-1}+\cdots+a_{m-l}q^{m-l}+a_{ts+s-l-1}q^{ts+s-l-1}+\cdots+a_{ts}q^{ts}\\
&\equiv a_{m-1}q^{s-1}+\cdots+a_{m-l}q^{s-l}+a_{ts+s-l-1}q^{s-l-1}+\cdots+a_{ts}\pmod {q^s-1}.
\end{split}
\end{equation}
Then, we obtain that (\ref{eq:sss}) and (\ref{eq:sss01}) hold simultaneously if and only if
$$a_{m-1}=a_{m-2}=\cdots=a_{ts}=0\,\,\, \text{ or}\,\,\, a_{m-1}=a_{m-2}=\cdots=a_{ts}=q-1.$$
If $a_{m-1}=a_{m-2}=\cdots=a_{m-s}=0$, from (\ref{eq:1001}) we have $(q^s-1)uq^{ts}=0$, i.e., $u=0$, which is contradictory to $u\geq 1$.
If $a_{m-1}=a_{m-2}=\cdots=a_{ts}=q-1$, from (\ref{eq:100101}) we know that
\begin{equation*}
\begin{split}
(q^s-1)uq^{ts}+q^{ts}-1=(\underbrace{q-1,\ldots,q-1,}_{l} \underbrace{0,0,\ldots,0,}_{m-ts-s}\underbrace{q-1,q-1,\ldots,q-1}_{ts+s-l})_q.
\end{split}
\end{equation*}
From (\ref{eq:10010101}), we have $A=q^{ts+s}-1$, which is contradictory to $A\neq q^{ts+s}-1$. This completes the proof.
\end{proof}

\begin{lemma}\label{lem:2225}
Let $n=\frac{q^m-1}{q^s -1}$ and $\frac{m}{s}\geq 3$. Let $2\leq \delta \leq n$ and $I(\delta)\geq 1$ be the integer such that $\{0, 1, 2, \ldots, I(\delta)-1\} \subseteq  T^{\perp}$ and $I(\delta)\notin  T^{\perp}$. Then
\begin{equation*}
\begin{aligned}
I(\delta)=\begin{cases}
\frac{q^{m-ts}-1}{q^s-1}, &{\rm if}\,\,  \frac{q^{ts}-1}{q^s-1}< \delta \leq \frac{q^{(t+1)s}-1}{q^s-1}\,\,\,(1\leq t \leq \frac{m}{s}-2),\\
1,&{\rm if}\,\, \frac{q^{m-s}-1}{q^s-1} < \delta \leq n.
               \end{cases}
\end{aligned}
\end{equation*}
\end{lemma}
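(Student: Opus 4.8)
The plan is to restate membership in $T^{\perp}$ as an inequality on $q$-cyclotomic coset leaders and then rescale by $q^s-1$ so that the already-proved Lemma~\ref{eq:lemt} applies verbatim. First I would record the reformulation: $0\in T^{\perp}$, and for $1\le j\le n-1$ we have $j\in T^{\perp}\iff j\notin T^{-1}\iff n-j\notin T\iff {\rm CL}(n-j)\ge\delta$. Hence $I(\delta)$ is precisely the least $j\ge 1$ with ${\rm CL}(n-j)\le\delta-1$, and it suffices to verify, for the claimed value of $I(\delta)$: (i) ${\rm CL}(n-I(\delta))\le\delta-1$, and (ii) ${\rm CL}(n-j)\ge\delta$ for every $1\le j\le I(\delta)-1$.

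Next I would set up the rescaling. Put $\lambda=q^s-1$, so $\lambda n=q^m-1$ and $q^m\equiv 1\pmod n$. By Lemma~\ref{lem:0913} (together with the trivial converse of Lemma~\ref{lem:0914}), multiplication by $\lambda$ sends the $q$-cyclotomic coset of $v$ modulo $n$ onto the $q$-cyclotomic coset of $\lambda v$ modulo $q^m-1$, with the coset leader modulo $q^m-1$ equal to $\lambda$ times the coset leader modulo $n$. Thus ${\rm CL}(v)\ge\theta$ modulo $n$ iff the coset leader of $\lambda v$ modulo $q^m-1$ is $\ge\lambda\theta$, and moreover $\lambda(n-j)=q^m-1-(q^s-1)j$.

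For part (i) I would compute directly. In the range $\frac{q^{ts}-1}{q^s-1}<\delta\le\frac{q^{(t+1)s}-1}{q^s-1}$ with $I(\delta)=\frac{q^{m-ts}-1}{q^s-1}$, one has $n-I(\delta)=\frac{q^m-q^{m-ts}}{q^s-1}=q^{m-ts}\cdot\frac{q^{ts}-1}{q^s-1}$, which (multiply by $q^{ts}$ and use $q^m\equiv 1\pmod n$) lies in the $q$-cyclotomic coset of $\frac{q^{ts}-1}{q^s-1}$ modulo $n$; hence ${\rm CL}(n-I(\delta))\le\frac{q^{ts}-1}{q^s-1}\le\delta-1$. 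In the range $\frac{q^{m-s}-1}{q^s-1}<\delta\le n$ with $I(\delta)=1$, likewise $n-1=q^s\cdot\frac{q^{m-s}-1}{q^s-1}$ lies in the coset of $\frac{q^{m-s}-1}{q^s-1}$, so ${\rm CL}(n-1)\le\frac{q^{m-s}-1}{q^s-1}\le\delta-1$; here part (ii) is vacuous since $\{0,\dots,I(\delta)-1\}=\{0\}\subseteq T^{\perp}$.

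Finally, for part (ii) in the $t$-ranges I would fix $1\le t\le\frac{m}{s}-2$ and $1\le j\le\frac{q^{m-ts}-1}{q^s-1}-1$, set $u=\frac{q^{m-ts}-1}{q^s-1}-j$ (so $1\le u\le\frac{q^{m-ts}-1}{q^s-1}-1$), and use the identity $\lambda(n-j)=q^m-1-(q^s-1)j=q^m-q^{m-ts}+(q^s-1)u\equiv\bigl(q^{ts}-1+(q^s-1)uq^{ts}\bigr)q^{m-ts}\pmod{q^m-1}$, which exhibits $\lambda(n-j)$ in the $q$-cyclotomic coset of $q^{ts}-1+(q^s-1)uq^{ts}$ modulo $q^m-1$; Lemma~\ref{eq:lemt} then gives that its coset leader is $\ge q^{(t+1)s}-1$, so dividing by $\lambda$ yields ${\rm CL}(n-j)\ge\frac{q^{(t+1)s}-1}{q^s-1}\ge\delta$. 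I would close by noting that $\frac{q^{ts}-1}{q^s-1}<\delta\le\frac{q^{(t+1)s}-1}{q^s-1}$ for $1\le t\le\frac{m}{s}-2$ together with $\frac{q^{m-s}-1}{q^s-1}<\delta\le n$ exhausts $2\le\delta\le n$. The only substantial ingredient is Lemma~\ref{eq:lemt}; the rest is bookkeeping, and the point that needs the most care is that one value of $I(\delta)$ must serve the whole $\delta$-range at once --- i.e. the bound $\frac{q^{(t+1)s}-1}{q^s-1}$ must dominate the largest admissible $\delta$, while ${\rm CL}(n-I(\delta))$ must stay $\le\delta-1$ for the smallest.
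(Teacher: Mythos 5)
Your proof is correct and follows essentially the same route as the paper's: both arguments reduce membership of $j$ in $T^{\perp}$ to the coset-leader bound ${\rm CL}(n-j)\ge\delta$, identify $n-I(\delta)$ with an element of $\mathbb{C}_{\frac{q^{ts}-1}{q^s-1}}$ (resp.\ $\mathbb{C}_{\frac{q^{m-s}-1}{q^s-1}}$) to show $I(\delta)\notin T^{\perp}$, and invoke Lemma~\ref{eq:lemt} via the same substitution $u=\frac{q^{m-ts}-1}{q^s-1}-j$ for the smaller indices. The only cosmetic difference is that you lift the computation to residues modulo $q^m-1$ and rescale by $q^s-1$ explicitly, whereas the paper works with the fractions modulo $n$ directly.
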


\begin{proof} By the definition of $T^{\perp}$, it is easy to check that $0 \in T^{\perp}$. We now prove that $\{1, 2, \ldots, I(\delta)-1\} \subseteq  T^{\perp}$ and $I(\delta)\notin  T^{\perp}$. There are two cases for discussion.

\noindent {\bf Case 1:} $\frac{q^{ts}-1}{q^s-1}< \delta \leq \frac{q^{(t+1)s}-1}{q^s-1}$, where $1\leq t \leq \frac{m}{s}-2$.
It is easy to see that
\[\frac{q^m-q^{m-ts}}{q^s-1}=\frac{q^{m-ts}(q^{ts}-1)}{q^s-1}\in \mathbb{C}_{\frac{q^{ts}-1}{q^s-1}}\subseteq T.\]
Then we have $\frac{q^{m-ts}-1}{q^s-1}=n-\frac{q^m-q^{m-ts}}{q^s-1} \in T^{-1}$ and $\frac{q^{m-ts}-1}{q^s-1}\notin T^{\perp}$.

In the following, we show that $\{1, 2, \ldots, \frac{q^{m-ts}-1}{q^s-1}-1\}\subseteq T^{\perp}$. For every integer $i$ with $1\leq i\leq \frac{q^{m-ts}-1}{q^s-1}-1$, we have $i=\frac{q^{m-ts}-1}{q^s-1}-u$, where $1\leq u\leq  \frac{q^{m-ts}-1}{q^s-1}-1$. Note that
\[\frac{q^{m-ts}(q^{ts}-1+(q^s-1)uq^{ts})}{q^s-1}\equiv \frac{q^m-1}{q^s-1}-\frac{q^{m-ts}-1}{q^s-1}+u \pmod n.\]
Then
\[\frac{q^{ts}-1+(q^s-1)uq^{ts}}{q^s-1}\in\mathbb{C}_{\frac{q^m-1}{q^s-1}-\frac{q^{m-ts}-1}{q^s-1}+u}= \mathbb{C}_{n-i}.\]

From Lemma \ref{eq:lemt}, we obtain
\[{\rm CL}(n-i)={\rm CL}\left(n-\frac{q^{m-ts}-1}{q^s-1}+u\right) > \frac{q^{ts+s}-1}{q^s-1}-1 \geq \delta-1,\]
where ${\rm CL}\left(n-\frac{q^{m-ts}-1}{q^s-1}+u\right)$ denotes the coset leader of the $q$-cyclotomic coset modulo $n$ containing  $n-\frac{q^{m-ts}-1}{q^s-1}+u$.
Then $n-(\frac{q^{m-ts}-1}{q^s-1}-u)\notin T$ and $\frac{q^{m-ts}-1}{q^s-1}-u\notin T^{-1}$. This leads to $i=\frac{q^{m-ts}-1}{q^s-1}-u\in T^{\perp}$. Thus, we have $I(\delta)=\frac{q^{m-ts}-1}{q^s-1}$.

\noindent {\bf Case 2:} $\frac{q^{m-s}-1}{q^s-1} < \delta \leq n$. It is easy to see that
\[\frac{q^m-1}{q^s-1}-1\equiv \frac{q^{s}(q^{m-s}-1)}{q^s-1}\in \mathcal{C}_{\frac{q^{m-s}-1}{q^s-1}}\subseteq T.\]
Then $1= n-(\frac{q^m-1}{q^s-1}-1)\in T^{-1}$ and $1\notin T^{\perp}$.
 Thus we have $I(\delta)=1$.
We completes the proof.
\end{proof}

From Lemma \ref{lem:2225} and BCH bound, we have the follow theorem on the lower bound on the minimum distance of $\mathcal{C}^{\perp}_{\delta}$.

\begin{theorem}\label{thm:1002}
Let $n=\frac{q^m-1}{q^s-1}$ and $\frac{m}{s}\geq 3$. Let $d^{\perp}(\delta)$ be the minimum distance of $\mathcal{C}^{\perp}_{\delta}$. Then, we have
\begin{equation*}
\begin{aligned}
d^{\perp}(\delta)\geq\begin{cases}
\frac{q^{m-ts}-1}{q^s-1}+1, &{\rm if}\,\,  \frac{q^{ts}-1}{q^s-1}< \delta \leq \frac{q^{(t+1)s}-1}{q^s-1}\,\,\,(1\leq t \leq \frac{m}{s}-2),\\
2,&{\rm if}\,\, \frac{q^{m-s}-1}{q^s-1} < \delta \leq n.
               \end{cases}
\end{aligned}
\end{equation*}
\end{theorem}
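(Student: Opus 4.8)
The plan is to read off the theorem as an immediate corollary of Lemma \ref{lem:2225} together with the BCH bound. Lemma \ref{lem:2225} tells us that the defining set $T^{\perp}$ of $\mathcal{C}^{\perp}_{\delta}$ contains the block of consecutive residues $\{0,1,\ldots,I(\delta)-1\}$ and that $I(\delta)\notin T^{\perp}$, where $I(\delta)$ is given by the explicit two-case formula there. So the first step is simply to observe that $\{0,1,\ldots,I(\delta)-1\}$ is a set of $I(\delta)$ consecutive elements of $\mathbb{Z}_n$ contained in $T^{\perp}$; since $\beta^{0}=1$ is among the roots of the generator polynomial of $\mathcal{C}^{\perp}_{\delta}$ (because $0\in T^{\perp}$), the BCH bound applies directly and yields $d^{\perp}(\delta)\geq I(\delta)+1$.

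The second step is pure bookkeeping: substitute the value of $I(\delta)$ supplied by Lemma \ref{lem:2225}. If $\frac{q^{ts}-1}{q^s-1}<\delta\leq\frac{q^{(t+1)s}-1}{q^s-1}$ for some $t$ with $1\leq t\leq\frac{m}{s}-2$, then $I(\delta)=\frac{q^{m-ts}-1}{q^s-1}$, whence $d^{\perp}(\delta)\geq\frac{q^{m-ts}-1}{q^s-1}+1$; if $\frac{q^{m-s}-1}{q^s-1}<\delta\leq n$, then $I(\delta)=1$, whence $d^{\perp}(\delta)\geq 2$. It then remains to check that these ranges of $\delta$ exhaust the admissible interval $2\leq\delta\leq n$: for $t=1,\ldots,\frac{m}{s}-2$ the half-open intervals $\left(\frac{q^{ts}-1}{q^s-1},\frac{q^{(t+1)s}-1}{q^s-1}\right]$ are adjacent (the right endpoint of the $t$-th being the left endpoint of the $(t+1)$-th), so their union is $\left(1,\frac{q^{m-s}-1}{q^s-1}\right]$, and adjoining the final interval $\left(\frac{q^{m-s}-1}{q^s-1},n\right]$ gives $(1,n]$, i.e. every $\delta$ with $2\leq\delta\leq n$ is covered exactly once.

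I do not expect a genuine obstacle at this stage, since all the combinatorial content --- tracking exactly which small residues $1,2,\ldots$ survive outside $T^{-1}$, which is where Lemma \ref{eq:lemt} on the coset leaders of $\mathbb{C}_{q^{ts}-1+(q^s-1)uq^{ts}}$ does the heavy lifting --- has already been absorbed into Lemma \ref{lem:2225}. The only mild point that deserves a word is the legitimacy of invoking the BCH bound on a consecutive block that begins at $0$; this is standard, as the BCH/Hartmann--Tzeng framework only requires $\delta-1$ consecutive exponents in the defining set and imposes no restriction on where that block starts. Consequently the theorem follows, and the write-up can even record the slightly stronger statement $d^{\perp}(\delta)\geq I(\delta)+1$ with $I(\delta)$ the exact length of the maximal run $\{0,1,\ldots\}\subseteq T^{\perp}$ identified in Lemma \ref{lem:2225}.
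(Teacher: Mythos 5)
Your proposal is correct and is essentially the paper's own argument: the authors also obtain Theorem \ref{thm:1002} directly from Lemma \ref{lem:2225} together with the BCH bound, reading off $d^{\perp}(\delta)\geq I(\delta)+1$ from the run $\{0,1,\ldots,I(\delta)-1\}\subseteq T^{\perp}$. Your extra remarks on the interval bookkeeping and on the legitimacy of a consecutive block starting at $0$ are sound but not needed beyond what the paper states.
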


\begin{remark}  In \cite[Theorem 28]{GDL21}, the authors gave the lower bound on the minimum distance of $\mathcal{C}^{\perp}_{\delta}$ for $s=1$ and we showed their results  in Lemma \ref{lem:7}. Theorem \ref{thm:1002} generalized their results from the case $s=1$ to the case $s\,|\,m$.
\end{remark}

It is very hard to determine the minimum distance of $\mathcal C_{\delta}^\bot$ in general. The following examples show that the lower bounds in Theorem \ref{thm:1002} are good in some cases.

\begin{example}
Let $\delta=3$, $q=2$, $s=1$, $t=1$ and $m=6$. In Theorem \ref{thm:1002}, the lower bound on the minimum distance of $\mathcal C_{3}^\bot$ is $32$. By Magma, the true minimum distance of $\mathcal C_{3}^\bot$ is $32$.
\end{example}

\begin{example}
Let $\delta=15$, $q=2$,  $s=1$, $t=1$ and $m=6$. In Theorem \ref{thm:1002}, the lower bound on the minimum distance of $\mathcal C_{15}^\bot$ is $8$.  By Magma, the true minimum distance of $\mathcal C_{15}^\bot$ is $8$.
\end{example}

In the following, we present a sufficient and necessary condition for $\mathcal{C}_{\delta}$ being a dually-BCH code. We first give a key lemma.

\begin{lemma}\label{lem:1002}
Let $n=\frac{q^m-1}{q^s-1}$ and $\frac{m}{s}\geq 3$. Let $s>1$, $1\leq t \leq \frac{m}{s}-2$ and $\frac{q^{ts}-1}{q^s-1}<\delta\leq \frac{q^{(t+1)s}-1}{q^s-1}$. Then
 $\frac{q^{m-ts}+q^{m-ts-1}-q^{m-(t+1)s-1}-1}{q^s-1} \in T^{\perp}$ is a $q$-cyclotomic coset leader modulo $n$.

\end{lemma}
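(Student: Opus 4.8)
The plan is to pass to $\mathbb{Z}_{q^m-1}$, where $q$-adic expansions make the arithmetic transparent, and to read both assertions off two explicit $m$-digit words. Write $\mu=q^s-1$ and $X=q^{m-ts}+q^{m-ts-1}-q^{m-(t+1)s-1}-1$, so that $X=\mu\cdot\frac{q^{m-ts}+q^{m-ts-1}-q^{m-(t+1)s-1}-1}{\mu}$ with $\mu\mid X$ (here $s\mid m$ is used) and $0<X/\mu<n$; in fact $X/\mu=\frac{q^{m-ts}-1}{q^s-1}+q^{m-(t+1)s-1}$, i.e.\ $X/\mu=I(\delta)+q^{m-(t+1)s-1}$ with $I(\delta)$ as in Lemma~\ref{lem:2225}. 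Using $1\le t\le\frac{m}{s}-2$ (so $m-ts\ge 2s$) and $s>1$, a routine carry/borrow computation gives the $q$-adic expansion
\begin{equation*}
X=(\underbrace{0,\ldots,0}_{ts-1},\,1,\,0,\,\underbrace{q-1,\ldots,q-1}_{s-1},\,q-2,\,\underbrace{q-1,\ldots,q-1}_{m-ts-s-1})_q ,
\end{equation*}
where the hypotheses guarantee $ts-1\ge 1$, $s-1\ge 1$ and $m-ts-s-1\ge s-1\ge 1$, so every displayed block is legitimate.

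To show $X/\mu\in T^{\perp}$ it suffices to show $n-X/\mu\notin T$, i.e.\ that the coset leader of $\mathbb{C}_{n-X/\mu}$ modulo $n$ is at least $\delta$; as $\delta\le\frac{q^{(t+1)s}-1}{\mu}$ it is enough to show it is at least $\frac{q^{(t+1)s}-1}{\mu}$. Put $Y=\mu(n-X/\mu)=q^m-1-X$; its $q$-adic expansion is the digitwise complement of that of $X$,
\begin{equation*}
Y=(\underbrace{q-1,\ldots,q-1}_{ts-1},\,q-2,\,q-1,\,\underbrace{0,\ldots,0}_{s-1},\,1,\,\underbrace{0,\ldots,0}_{m-ts-s-1})_q .
\end{equation*}
In the cyclic word $Y$ the maximal runs of zeros have lengths $s-1$ and $m-ts-s-1=m-(t+1)s-1$ (plus one isolated zero when $q=2$); since $t\le\frac{m}{s}-2$ the longest of these is $m-(t+1)s-1<m-(t+1)s$. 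Hence no circular left-shift of $Y$ has $m-(t+1)s$ or more leading zeros, so each shift carries a nonzero digit among its top $m-(t+1)s$ coordinates and is therefore $\ge(\underbrace{0,\ldots,0}_{m-(t+1)s},\underbrace{q-1,\ldots,q-1}_{(t+1)s})_q=q^{(t+1)s}-1$. By Lemma~\ref{lem1b21} the coset leader of $\mathbb{C}_Y$ modulo $q^m-1$ is at least $q^{(t+1)s}-1$; since $\mu\mid Y$, a short argument via Lemma~\ref{lem:0913} transfers this to ${\rm CL}(n-X/\mu)\ge\frac{q^{(t+1)s}-1}{\mu}\ge\delta$, so $n-X/\mu\notin T$ and $X/\mu\in T^{\perp}$.

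For the second assertion, by Lemma~\ref{lem:0913} it is enough to show $X$ is a coset leader modulo $q^m-1$, and by Lemma~\ref{lem1b21} (with $a=b=X$) this is equivalent to: every circular left-shift of the displayed word $X$ is $\ge X$. The maximal zero-runs of the cyclic word $X$ are the leading run of length $ts-1$ and isolated zeros of length $1$, so the leading run is (weakly) longest. A shift with fewer than $ts-1$ leading zeros exceeds $X$ at the first coordinate where it is nonzero while $X$ is still $0$; no shift has more than $ts-1$ leading zeros; and a shift with exactly $ts-1$ leading zeros equals $X$ as soon as $ts-1\ge 2$, since then the leading run is the unique longest zero-run. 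This leaves only $ts=2$, i.e.\ $t=1$, $s=2$, where $X=(0,1,0,q-1,q-2,q-1,\ldots,q-1)_q$; the competing shifts start at the remaining isolated zero(s), and since the digit immediately following such a zero is $q-1\ge 1$ whereas the digit following the leading zero of $X$ is $1$, a short coordinatewise comparison shows each competing shift is $\ge X$. Hence $X$, and therefore $X/\mu$, is a coset leader.

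The step I expect to be the real obstacle is precisely this exceptional case $ts=2$ in the last paragraph: the leading zero-run then has length only $1$ and ties with the other isolated zeros, so the tidy ``unique longest run'' argument breaks down and one is forced to compare the finitely many competing rotations coordinate by coordinate, with the additional nuisance that when $q=2$ the digit $q-2$ is itself $0$ and produces one more isolated zero. The rest of the work is the careful verification of the two displayed $q$-adic expansions, where the hypotheses $s>1$ and $t\le\frac{m}{s}-2$ are exactly what keep every block length nonnegative and make the carries and borrows behave as claimed.
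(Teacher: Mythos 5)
Your proof is correct and follows essentially the same route as the paper: the same two $q$-adic expansions of $q^{m-ts}+q^{m-ts-1}-q^{m-(t+1)s-1}-1$ and of its complement, the same transfer between moduli $q^m-1$ and $n$ via Lemma~\ref{lem:0913}, and the same lower bound on ${\rm CL}(n-X/\mu)$ to place $X/\mu$ in $T^{\perp}$. The only differences are cosmetic: you lower-bound the coset leader of the complement by $q^{(t+1)s}-1$ where the paper computes it exactly, and you spell out the rotation-by-rotation check (including the $ts=2$, $q=2$ edge case) that the paper dismisses with ``it is easy to check.''
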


\begin{proof}
Note that
\[q^{m-ts}+q^{m-ts-1}-q^{m-(t+1)s-1}-1=
(\underbrace{0,\ldots,0}_{st-1},1,0,\underbrace{q-1,\ldots,q-1}_{s-1},q-2,\underbrace{q-1,\ldots,q-1}_{m-(t+1)s-1})_q.\]
Since $s>1$, it is easy to check that $q^{m-ts}+q^{m-ts-1}-q^{m-(t+1)s-1}-1$ is a $q$-cyclotomic coset leader modulo $q^m-1$. Obviously,
$$q^s-1\, | \, (q^{m-ts}-1)+q^{m-st-s-1}(q^s-1)=q^{m-ts}+q^{m-ts-1}-q^{m-(t+1)s-1}-1.$$
Then, from Lemma \ref{lem:0913} we know that $\frac{q^{m-ts}+q^{m-ts-1}-q^{m-(t+1)s-1}-1}{q^s-1}$ is a $q$-cyclotomic coset leader modulo $n$.

It is easy to see that
\[n-\frac{q^{m-ts}+q^{m-ts-1}-q^{m-(t+1)s-1}-1}{q^s-1}=\frac{q^{m}-q^{m-ts}-q^{m-ts-1}+q^{m-(t+1)s-1}}{q^s-1}\]
and
\begin{equation}\label{eq:qm02}
q^{m}-q^{m-ts}-q^{m-ts-1}+q^{m-(t+1)s-1}=(\underbrace{q-1,\ldots,q-1}_{ts-1},q-2,q-1,\underbrace{0,\ldots,0,}_{s-1}
1,\underbrace{0,\cdots,0}_{m-(t+1)s-1})_q.
\end{equation}
Since $1\leq t \leq \frac{m}{s}-2$, we have $m-ts>2s$. Then, from (\ref{eq:qm02}) we know that the $q$-cyclotomic coset leader of $\mathbb{C}_{q^{m}-q^{m-ts}-q^{m-ts-1}+q^{m-(t+1)s-1}}$ modulo $q^m-1$ is
\[q^{(t+1)s+1}-q^{s+1}-q^s+1=(\underbrace{0,\ldots,0,}_{m-(t+1)s-1}\underbrace{q-1,\ldots,q-1,}_{ts-1}q-2,q-1,\underbrace{0,\ldots,0,}_{s-1}
1)_q.\]
Hence, from Lemma \ref{lem:0913} we obtain
\begin{equation*}
\begin{split}
{\rm CL}\left(n-\frac{q^{m-ts}+q^{m-ts-1}-q^{m-(t+1)s-1}-1}{q^s-1}\right)=\frac{q^{(t+1)s+1}-q^{s+1}-q^s+1}{q^s-1}>\delta-1.
\end{split}
\end{equation*}
This leads to $n-\frac{q^{m-ts}+q^{m-ts-1}-q^{m-(t+1)s-1}-1}{q^s-1}\notin T$ and $\frac{q^{m-ts}+q^{m-ts-1}-q^{m-(t+1)s-1}-1}{q^s-1}\notin T^{-1}$. Thus we have $\frac{q^{m-ts}+q^{m-ts-1}-q^{m-(t+1)s-1}-1}{q^s-1}\in T^{\perp}$. This completes the proof.
\end{proof}
%
%

From Lemma \ref{lem:1002} and the previous conclusions, we can get the following results.

\begin{theorem}\label{thm-02} Let
$n=\frac{q^m-1}{q^s-1}$ be the length of BCH code $\mathcal{C}_\delta$, where $\frac{m}{s}\geq 3$. Let $m\geq 4$ if $q\neq 2$ and $m\geq 6$ if $q=2$. If $q=2$ and $s=1$, then $\mathcal{C}_{\delta}$ is a dually-BCH code if and only if
$$\delta=2,3, or \,\,\,\,\delta_2+1 \leq \delta \leq n,$$
where $\delta_2$ is given in Lemma \ref{lem:qm1q1}.
If $q> 2$ or $s>1$, then $\mathcal{C}_{\delta}$ is a dually-BCH code if and only if
$$ \delta_1< \delta \leq n,$$
where $\delta_1$ is the first largest $q$-cyclotomic coset leader modulo $n$.

\end{theorem}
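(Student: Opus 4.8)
The plan is to reduce the statement to a single combinatorial criterion about the defining set $T^{\perp}$ and then test that criterion on the three natural ranges of $\delta$. Since $\mathcal{C}_\delta$ is a narrow-sense BCH code it is in particular a BCH code, so $\mathcal{C}_\delta$ is a dually-BCH code if and only if $\mathcal{C}_\delta^{\perp}$ is a BCH code, i.e.\ if and only if $T^{\perp}$ is the union of the $q$-cyclotomic cosets of a set of consecutive integers modulo $n$. I would first normalize such an interval. Because $0\in T^{\perp}$ and, since $1\in\mathbb{C}_1\subseteq T$, also $n-1\in T^{-1}$ (so $n-1\notin T^{\perp}$), any interval realizing $T^{\perp}$, translated so that it contains a representative of $0$, cannot extend to the left of $0$, hence has the form $\{0,1,\ldots,b\}$; and since $\{0,1,\ldots,I(\delta)-1\}\subseteq T^{\perp}$ while $I(\delta)\notin T^{\perp}$ by Lemma~\ref{lem:2225}, necessarily $b=I(\delta)-1$. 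As $\bigcup_{i=0}^{I(\delta)-1}\mathbb{C}_i\subseteq T^{\perp}$ always holds, this yields the criterion I will use: $\mathcal{C}_\delta^{\perp}$ is a BCH code if and only if $T^{\perp}=\bigcup_{i=0}^{I(\delta)-1}\mathbb{C}_i$, equivalently if and only if there is no $q$-cyclotomic coset leader $c$ modulo $n$ with $c\geq I(\delta)$ and $\mathbb{C}_c\subseteq T^{\perp}$ (note $\mathbb{C}_c\subseteq T^{\perp}$ iff $\mathbb{C}_{n-c}\cap T=\emptyset$ iff ${\rm CL}(n-c)\geq\delta$). All the remaining work is to decide, for each range of $\delta$, whether such a ``rogue'' coset leader exists.

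For $s=1$ the length is $n=\frac{q^m-1}{q-1}$, which is $2^m-1$ when $q=2$ (settled in \cite{GDL21}) and the projective length $\frac{q^m-1}{q-1}$ when $q\geq 3$ (settled in \cite{Wang23}); the $m\geq 4,\ m\geq 6$ hypotheses are exactly what those references require. The coset leader produced in Lemma~\ref{lem:1002} degenerates when $s=1$, which is precisely why these subcases are handled by prior work, and I would simply invoke it. So assume $s\geq 2$, where the claim is that $\mathcal{C}_\delta$ is dually-BCH exactly for $\delta_1<\delta\leq n$. For sufficiency: if $\delta>\delta_1$ then $\{1,\ldots,\delta-1\}$ contains every nonzero $q$-cyclotomic coset leader modulo $n$, so $T=\mathbb{Z}_n\setminus\{0\}$, hence $T^{-1}=\mathbb{Z}_n\setminus\{0\}$ and $T^{\perp}=\{0\}=\mathbb{C}_0$, which is the defining set of $\mathcal{C}_{(2,0)}$; thus $\mathcal{C}_\delta^{\perp}$ is a BCH code.

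For necessity, assume $2\leq\delta\leq\delta_1$ and split according to the two cases of Lemma~\ref{lem:2225}. If $2\leq\delta\leq\frac{q^{m-s}-1}{q^s-1}$, then $\delta$ lies in a Case~1 interval $\frac{q^{ts}-1}{q^s-1}<\delta\leq\frac{q^{(t+1)s}-1}{q^s-1}$ with $1\leq t\leq\frac{m}{s}-2$ and $I(\delta)=\frac{q^{m-ts}-1}{q^s-1}$. Lemma~\ref{lem:1002} supplies the $q$-cyclotomic coset leader $c=\frac{q^{m-ts}+q^{m-ts-1}-q^{m-(t+1)s-1}-1}{q^s-1}$ modulo $n$ with $c\in T^{\perp}$, and comparing numerators gives $c>\frac{q^{m-ts}-1}{q^s-1}=I(\delta)$; hence $\mathbb{C}_c\subseteq T^{\perp}\setminus\bigcup_{i=0}^{I(\delta)-1}\mathbb{C}_i$ and $\mathcal{C}_\delta^{\perp}$ is not a BCH code. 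If instead $\frac{q^{m-s}-1}{q^s-1}<\delta\leq\delta_1$, then $I(\delta)=1$, so $\bigcup_{i=0}^{I(\delta)-1}\mathbb{C}_i=\{0\}$; but $\delta_1\geq\delta$ forces the full coset $\mathbb{C}_{\delta_1}$ (whose leader exceeds $\delta-1$) to be disjoint from $T$, whence $\mathbb{C}_{n-\delta_1}\subseteq T^{\perp}$, and since $\delta_1$ is a coset leader of size at least $2$ this coset is not $\{0\}$; again $\mathcal{C}_\delta^{\perp}$ is not a BCH code. The two sub-ranges exhaust $2\leq\delta\leq\delta_1$ because Lemma~\ref{lem:1002} with $t=1$ already produces a coset leader $\frac{q^{m-s}+q^{m-s-1}-q^{m-2s-1}-1}{q^s-1}>\frac{q^{m-s}-1}{q^s-1}$ modulo $n$, so $\delta_1>\frac{q^{m-s}-1}{q^s-1}$.

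Finally, for the singled-out case $q=2,\ s=1$ (so $n=2^m-1$, $\delta_1=2^{m-1}-1$, $\delta_2=2^{m-1}-2^{\lfloor(m-1)/2\rfloor}-1$ by Lemma~\ref{lem:qm1q1}) the same criterion is applied directly. For $\delta\in\{2,3\}$ one has $T=\mathbb{C}_1$, $T^{-1}=\mathbb{C}_{\delta_1}$ and $I(\delta)=\delta_1$; since $\delta_1$ is the only coset leader that is $\geq I(\delta)$ and $\mathbb{C}_{\delta_1}\cap T^{\perp}=\emptyset$, the criterion holds and $\mathcal{C}_\delta^{\perp}$ is a BCH code. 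For $\delta_2<\delta\leq\delta_1$ one computes $T=\mathbb{Z}_n\setminus(\{0\}\cup\mathbb{C}_{\delta_1})$, hence $T^{\perp}=\{0\}\cup\mathbb{C}_{n-\delta_1}=\mathbb{C}_0\cup\mathbb{C}_1$ (because $n-\delta_1=2^{m-1}\in\mathbb{C}_1$), which equals $\bigcup_{i=0}^{I(\delta)-1}\mathbb{C}_i$ with $I(\delta)=3$; for $\delta_1<\delta\leq n$ one has $T^{\perp}=\{0\}$ as before; and for $4\leq\delta\leq\delta_2$ one must produce, exactly as in the $s\geq 2$ argument, a coset leader $\geq I(\delta)$ lying in $T^{\perp}$, which is the binary-primitive instance already treated in \cite{GDL21}. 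The main obstacle throughout is this ``only if, small $\delta$'' step: for every $\delta$ below the first largest coset leader one needs an explicit coset leader of the dual's defining set sitting strictly above $I(\delta)$, and that is precisely what Lemma~\ref{lem:1002} (together with its binary analogue in \cite{GDL21}) provides; the normalization of the first paragraph and the large-$\delta$ computations are routine.
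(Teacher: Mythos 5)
Your proposal is correct and follows essentially the same route as the paper: reduce dually-BCH to the condition that $T^{\perp}$ equals $\mathbb{C}_0\cup\cdots\cup\mathbb{C}_{I(\delta)-1}$, settle $\delta>\delta_1$ by $T^{\perp}=\{0\}$, use Lemma~\ref{lem:1002} together with the value of $I(\delta)$ from Lemma~\ref{lem:2225} to exhibit a rogue coset leader when $\delta$ lies in a Case-1 interval, use $\mathbb{C}_{n-\delta_1}\subseteq T^{\perp}$ when $\frac{q^{m-s}-1}{q^s-1}<\delta\leq\delta_1$, and delegate $s=1$ to \cite{GDL21} and \cite{Wang23}. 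The extra touches you add (verifying that the two sub-ranges exhaust $[2,\delta_1]$, and the explicit $q=2,\ s=1$ computation) are consistent with, and slightly more detailed than, the paper's argument.
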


\begin{proof}
If $s=1$, the results have been given in  \cite[Theorem 17]{GDL21} for $q=2$, \cite[Theorem 30]{GDL21} for $q=3$ and \cite[Theorem 4.13]{Wang23} for $q>3$.
In the following, we always assume that $s>1$.

It is clear that $0\notin T$ and $1\in T$, then $0 \notin T^{-1}$ and $n-1 \in T^{-1}$, which means that $0 \in T^{\perp}$ and $n-1 \notin T^{\perp}$. Hence, $\mathbb{C}_0$ must be the initial cyclotomic coset of $T^{\perp}$. In other words, there must be an integer $J\geq 1$ such that $T^{\perp}=\mathbb{C}_0\cup \mathbb{C}_1\cup \mathbb{C}_2 \cup \cdots \cup \mathbb{C}_{J-1}$ if $\mathcal{C}_{\delta}$ is a dually-BCH code.

If $\delta_1< \delta \leq n$, it is easy to see that $T^{\perp}=\{0\}$ and $\mathcal{C^{\perp}_{\delta}}$ is a BCH code with respect to $\beta$.
It remains to show that $\mathcal{C^{\perp}_{\delta}}$ is not a dually-BCH code with respect to $\beta$ when $2\leq \delta\ \leq \delta_1$ for $m\geq 3s$. To this end, we show that there doesn't exist an integer $J\geq 1$ such that $T^{\perp}=\mathbb{C}_0\cup \mathbb{C}_1\cup \mathbb{C}_2 \cup \cdots \cup \mathbb{C}_{J-1}$. The discussion is divided into the following two cases.

\noindent{\bf Case 1:} $\frac{q^{ts}-1}{q^s-1}<\delta\leq \frac{q^{(t+1)s}-1}{q^s-1}$, where $1\leq t \leq \frac{m}{s}-2$. In this case, from Lemma \ref{lem:1002}, we know that $\frac{q^{m-ts}+q^{m-ts-1}-q^{m-(t+1)s-1}-1}{q^s-1} \in T^{\perp}$ is a $q$-cyclotomic coset leader modulo $n$. It then follows from Lemma \ref{lem:2225}, we have
$$I_{max}\,:=max\left\{I(\delta)\,:\, \frac{q^{ts}-1}{q^s-1}<\delta\leq \frac{q^{(t+1)s}-1}{q^s-1}\right\}=\frac{q^{m-ts}-1}{q^s-1}.$$
It is clear that $\frac{q^{m-ts}+q^{m-ts-1}-q^{m-(t+1)s-1}-1}{q^s-1} > \frac{q^{m-ts}-1}{q^s-1}$. Thus, there is no an integer $J\geq 1$ such that $T^{\perp}=\mathbb{C}_0\cup \mathbb{C}_1\cup \mathbb{C}_2 \cup \cdots \cup \mathbb{C}_{J-1}$, i.e., $\mathcal{C^{\perp}_{\delta}}$ is not a BCH code with respect to $\beta.$

\noindent{\bf Case 2:}  $\frac{q^{m-s}-1}{q^s-1} < \delta \leq \delta_1$. From Lemma  \ref{lem:2225}, we know that
$$I_{max}\,:=max\left\{I(\delta)\,:\, \frac{q^{m-s}-1}{q^s-1} < \delta \leq \delta_1\right\}=1,$$
which means that $T^{\perp}=\{0\}$ if $\mathcal{C^{\perp}_{\delta}}$ is a BCH code with respect to $\beta$.  By definition, we have $\delta_1 \notin T$, then $n-\delta_1 \notin T^{-1}$ and $n-\delta_1 \in T^{\perp}$. This leads to
$${\rm CL}(n-\delta_1)\in T^{\perp}.$$
It is obvious that ${\rm CL}(n-\delta_1)\neq 0$. Then there must be a $q$-cyclotomic coset leader modulo $n$, which is greater than $1$ in $T^{\perp}$. Thus, there doesn't exist an integer $J\geq 1$ such that $T^{\perp}=\mathbb{C}_0\cup \mathbb{C}_1\cup \mathbb{C}_2 \cup \cdots \cup \mathbb{C}_{J-1}$, i.e., $\mathcal{C^{\perp}_{\delta}}$ is not a BCH code with respect to $\beta.$
The desired conclusion follows.
\end{proof}

\begin{example}
Let $q=3$, $s=2$ and $m=6$. By Magma, $\delta_1=49$  and $\mathcal{C}_\delta$ is a dually-BCH code if and only if $49 < \delta\leq 91$, where $\delta_1$ is the first largest  $q$-cyclotomic coset leader modulo $91$. This experimental result coincides with Theorem \ref{thm-02}.
\end{example}

\begin{example}
Let $q=3$, $s=3$ and $m=9$. By Magma, $\delta_1=388$  and $\mathcal{C}_\delta$ is a dually-BCH code if and only if $388 < \delta\leq 757$,  where $\delta_1$ is the first largest  $q$-cyclotomic coset leader modulo $757$. This experimental result coincides with Theorem \ref{thm-02}.
\end{example}

\begin{remark}
It seems difficult  to determine the first largest $q$-cyclotomic coset leader modulo $\frac{q^m-1}{q^s-1}$ in general. When $s=1$, the first few largest $q$-cyclotomic coset leaders were given in Lemma \ref{lem:qm1q1} and Lemma \ref{lem:qm1q101} if $q=2$ and $q\geq 3$, respectively.
\end{remark}

%

\subsection{The case $\lambda\mid q-1$ }

In this subsection, we always assume that $\lambda\neq q-1$ since the results for $\lambda=q-1$ have been given in Subsection $A$. We first give the lower bound on the minimum distance of the dual of $\mathcal{C}_{\delta}$. To this end, the following lemma will be employed later.

\begin{lemma}\label{lem:0916}
Let $1\leq s\leq \frac{q-1}{\lambda}-1$ and $0\leq t\leq m-2$ be integers. Let $1\leq u\leq  \frac{q^{m-t}-1}{\lambda}-sq^{m-t-1}-1$, then the coset leader in  $\mathbb{C}_{(\lambda u+1)q^{t+1}-q+\lambda s}$ modulo $q^m-1$ is greater than  $q^{t+1}-q+\lambda s$.
\end{lemma}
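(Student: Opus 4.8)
The plan is to imitate the proof of Lemma~\ref{eq:lemt}: reformulate the assertion as a statement about runs of consecutive zeros in a $q$-adic expansion, and then eliminate the offending configurations using the divisibility restriction on $u$.

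First I would record the $q$-adic expansion of the threshold value. Since $1\le s\le\frac{q-1}{\lambda}-1$ we have $1\le\lambda s\le q-2$, and a direct computation gives
$$q^{t+1}-q+\lambda s=(\underbrace{0,\ldots,0}_{m-t-1},\underbrace{q-1,\ldots,q-1}_{t},\lambda s)_q,$$
so this number is smaller than $q^{t+1}$ and its expansion begins with exactly $m-t-1$ zeros (here $t\le m-2$ guarantees $m-t-1\ge1$). Next I would pin down the cyclotomic coset of $N:=(\lambda u+1)q^{t+1}-q+\lambda s$ modulo $q^m-1$. Writing $\lambda u=wq^{m-t-1}+r$ with $0\le r<q^{m-t-1}$, the hypothesis $u\le\frac{q^{m-t}-1}{\lambda}-sq^{m-t-1}-1$ forces $\lambda u<q^{m-t-1}(q-\lambda s)$, hence $0\le w\le q-\lambda s-1$; then $q^m\equiv1\pmod{q^m-1}$ together with the fact that adding $w$ to the least significant digit $\lambda s$ causes no carry (as $\lambda s+w\le q-1$) shows that the $q$-adic expansion of $N\bmod(q^m-1)$ is
$$(\underbrace{r_{m-t-2},\ldots,r_{0}}_{\text{digits of }r},\underbrace{q-1,\ldots,q-1}_{t},\lambda s+w)_q,$$
whose $t+1$ least significant digits are all nonzero. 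A short side computation with the same bound on $u$ also rules out the degenerate case in which all $m$ digits equal $q-1$ (which would force $u=\frac{q^{m-t}-1}{\lambda}-sq^{m-t-1}$, one more than is allowed), so $\mathbb{C}_N\ne\{0\}$.

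With these two expansions in hand I would argue by contradiction. Suppose the coset leader $A$ of $\mathbb{C}_N$ satisfies $A\le q^{t+1}-q+\lambda s<q^{t+1}$. Then the expansion of $A$ has at least $m-t-1$ leading zeros, and since, by Lemma~\ref{lem1b21}, every element of $\mathbb{C}_N$ is a circular shift of the expansion of $N\bmod(q^m-1)$, that expansion must contain a cyclic run of at least $m-t-1$ zeros. Because its $t+1$ lowest digits form a nonzero contiguous block, the run can occupy only positions $t+1,\ldots,m-1$, i.e. $r=0$. If in addition $w=0$ then $\lambda u=0$, contradicting $u\ge1$; and if $w\ge1$ then $N\bmod(q^m-1)$ has expansion
$$(\underbrace{0,\ldots,0}_{m-t-1},\underbrace{q-1,\ldots,q-1}_{t},\lambda s+w)_q,$$
whose zeros already occupy the most significant positions, so no circular shift of it decreases its value; hence $A=q^{t+1}-q+\lambda s+w>q^{t+1}-q+\lambda s$, again a contradiction. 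Therefore $A>q^{t+1}-q+\lambda s$, which is what we want.

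I expect the two $q$-adic expansion computations and the carry bookkeeping to be routine. The one genuinely delicate point is the sub-case $r=0$, $w\ge1$: there $N\bmod(q^m-1)$ is itself smaller than $q^{t+1}$, so a bare count of consecutive zeros yields nothing, and one must instead identify the coset leader exactly and observe that it still overshoots the threshold, by the margin $w\ge1$. The other place where the precise form of the bound on $u$ is essential is in recognising that the excluded endpoint $u=\frac{q^{m-t}-1}{\lambda}-sq^{m-t-1}$ is exactly the value that would make $\mathbb{C}_N$ trivial.
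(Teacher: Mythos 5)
Your proof is correct and follows essentially the same route as the paper's: compute the $q$-adic expansion of $(\lambda u+1)q^{t+1}-q+\lambda s$ reduced modulo $q^m-1$, observe that its $t+1$ least significant digits are nonzero, and exclude a cyclic run of $m-t-1$ zeros via Lemma~\ref{lem1b21}. The only difference is presentational: you unify the paper's two cases ($N$ below or above $q^m-1$) through the single division $\lambda u = wq^{m-t-1}+r$, and you treat the boundary sub-case $r=0$, $w\ge 1$ (where strictness comes from the margin $w$) more explicitly than the paper does.
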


\begin{proof} It is easy to check that $(\lambda u+1)q^{t+1}-q+\lambda s\neq q^m-1$. We now prove this lemma from the following two cases.

\noindent {\bf Case 1:}  $0<(\lambda u+1)q^{t+1}-q+\lambda s<q^m-1$. By the definition of $\lambda$ and $s$, we know that $\lambda s < q-1$. Then it is straightforward to see that
\begin{equation}\label{eq:0913}
q^{t+1}-q+\lambda s=(\underbrace{0,\ldots,0}_{m-t-1},\underbrace{q-1,\ldots,q-1}_{t},\lambda s)_q
\end{equation}
and
\begin{equation}\label{eq:091301}
(\lambda u+1)q^{t+1}-q+\lambda s=\lambda uq^{t+1}+(q^{t+1}-q+\lambda s) = (\underbrace{i_{m-1},\ldots,i_{t+1}}_{m-t-1}, \underbrace{q-1,\ldots,q-1}_{t},\lambda s)_q.
\end{equation}
By the definition of $\lambda$ and $u$, there  exist at least two integers $i_{j_1}, i_{j_2}$ in (\ref{eq:091301}) such that $i_{j_1}\neq 0$ and $i_{j_2}\neq 0$, where $t+1\leq j_1,j_2 \leq m-1$. From (\ref{eq:0913}), (\ref{eq:091301}) and Lemma \ref{lem1b21}, the coset leader of $\mathbb{C}_{(\lambda u+1)q^{t+1}-q+\lambda s}$ modulo $q^m-1$ is greater than  $q^{t+1}-q+\lambda s$.

\noindent {\bf Case 2:} $(\lambda u+1)q^{t+1}-q+\lambda s>q^m-1$. In this case, it is obvious that $\lambda u\geq q^{m-t-1}$.
By the definition of $\lambda$ and $u$, we have $\lambda u\leq q^{m-t}-\lambda s q^{m-t-1}-\lambda-1$. Then $\lambda u$ can be expressed as
$$\lambda u=a_{m-t-1}q^{m-t-1}+a_{m-t-2}q^{m-t-2}+\cdots+a_1q+a_0,$$
where $0\leq a_0,\ldots,a_{m-t-2}\leq q-1$ and $0\leq a_{m-t-1}\leq q-\lambda s-1$.
Hence, the $q$-adic expansion of $(\lambda u+1)q^{t+1}-q+\lambda s$ is
\begin{equation}\label{eq:1025}
\begin{split}
(\lambda u+1)q^{t+1}-q+\lambda s&=\lambda uq^{t+1}+(q^{t+1}-q+\lambda s) \\
&\equiv(\underbrace{a_{m-t-2},\ldots,a_0}_{m-t-2}, \underbrace{q-1,\ldots,q-1}_{t},\lambda s,a_{m-t-1})_q \pmod {q^m-1}.
\end{split}
\end{equation}
From (\ref{eq:0913}), (\ref{eq:1025}) and Lemma \ref{lem1b21}, the coset leader of $\mathbb{C}_{(\lambda u+1)q^{t+1}-q+\lambda s}$ modulo $q^m-1$ is greater than  $q^{t+1}-q+\lambda s$.
The desired conclusion then follows.
\end{proof}

\begin{lemma}\label{eq:lem919}
Let $n=\frac{q^m-1}{\lambda}$ and $2\leq \delta \leq n$. Let $I(\delta)\geq 1$ be the integer such that $\{0, 1, 2, \ldots, I(\delta)-1\} \subseteq T^{\perp}$ and $I(\delta)\notin  T^{\perp}$. Then
\begin{equation*}
\begin{aligned}
I(\delta)=\begin{cases}
\frac{q^{m-t}-1}{\lambda}-sq^{m-t-1}, &{\rm if} \,\,\,  \delta = \frac{q^{t+1}-q}{\lambda}+s+1\,\, (1\leq s\leq \frac{q-1}{\lambda}-1,\,\, 0\leq t\leq m-2),\\
\frac{q^{m-t}-1}{\lambda}-s, &{\rm if} \,\,\, \frac{q^t-1}{\lambda}+sq^t < \delta \leq \frac{q^t-1}{\lambda}+(s+1)q^t\,\,(0\leq s\leq \frac{q-1}{\lambda}-2,\,\, 1\leq t\leq m-1),\\
\frac{q^{m-t}-q}{\lambda}+1, &{\rm if} \,\,\, \frac{q^{t+1}-1}{\lambda}-q^t < \delta \leq \frac{q^{t+1}-q}{\lambda}+1\,\,(1\leq t\leq m-2),\\
1,&{\rm if} \,\,\, \frac{q^m-1}{\lambda}-q^{m-1} < \delta \leq n.
               \end{cases}
\end{aligned}
\end{equation*}
\end{lemma}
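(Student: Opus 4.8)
The plan is to reduce the claim to $q$-adic arithmetic modulo $q^m-1$ and then apply Lemmas \ref{lem:0913}, \ref{lem:0914}, \ref{lem1b21} and \ref{lem:0916}, following the pattern of Lemma \ref{lem:2225}. Writing ${\rm CL}_{n}$ and ${\rm CL}_{q^m-1}$ for the $q$-cyclotomic coset leaders modulo $n$ and modulo $q^m-1$, I first record the equivalence
$$i\in T^{\perp}\iff n-i\notin T\iff {\rm CL}_{n}(n-i)\geq \delta\qquad (1\leq i\leq n-1),$$
immediate from $T^{\perp}=\mathbb{Z}_{n}\setminus T^{-1}$ and $T=\mathbb{C}_{1}\cup\cdots\cup\mathbb{C}_{\delta-1}$. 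So, in each of the four cases, it suffices to prove: (I) ${\rm CL}_{n}(n-I(\delta))\leq \delta-1$, and (II) ${\rm CL}_{n}(n-i)\geq \delta$ for all $1\leq i\leq I(\delta)-1$ (that $0\in T^{\perp}$ is trivial). Since $\lambda\mid q^m-1$ and $\gcd(\lambda,q)=1$, every element in the $q$-coset of $\lambda x$ modulo $q^m-1$ is divisible by $\lambda$, and combining Lemma \ref{lem:0913} with Lemma \ref{lem:0914} gives ${\rm CL}_{n}(x)=\frac{1}{\lambda}{\rm CL}_{q^m-1}(\lambda x)$ for $0<x<n$; also $\lambda(n-i)=q^m-1-\lambda i$. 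This turns both (I) and (II) into coset-leader computations modulo $q^m-1$.

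For (I), I would in each case write out $\lambda(n-I(\delta))$, read off its $q$-adic expansion, determine its coset leader modulo $q^m-1$ using Lemma \ref{lem1b21}, and divide by $\lambda$; this should yield
$${\rm CL}_{n}(n-I(\delta))=\frac{q^{t+1}-q}{\lambda}+s,\quad \frac{q^{t}-1}{\lambda}+sq^{t},\quad \frac{q^{t+1}-1}{\lambda}-q^{t},\quad \frac{q^m-1}{\lambda}-q^{m-1}$$
in the four cases respectively, and in each case the inequality on $\delta$ imposed in the statement is exactly $\delta-1\geq {\rm CL}_{n}(n-I(\delta))$, so $n-I(\delta)\in\mathbb{C}_{j}\subseteq T$ for some $1\leq j\leq \delta-1$, giving (I).

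For (II), I would write $i=I(\delta)-u$ with $1\leq u\leq I(\delta)-1$; a direct computation gives $\lambda(n-i)=q^m-q^{m-t}+W$ with $0\leq W<q^{m-t}$ and $\lambda\mid W$, so $\lambda(n-i)$ has $q$-adic expansion $(\underbrace{q-1,\ldots,q-1}_{t},[\,q\text{-adic of }W\,])_{q}$. In the first case ($W=\lambda sq^{m-t-1}+\lambda u$) one checks $q^{t+1}\cdot\lambda(n-i)\equiv (\lambda u+1)q^{t+1}-q+\lambda s\pmod{q^m-1}$ with $(s,t,u)$ in the range allowed by Lemma \ref{lem:0916}, whence ${\rm CL}_{q^m-1}(\lambda(n-i))>q^{t+1}-q+\lambda s$ and therefore ${\rm CL}_{n}(n-i)>\frac{q^{t+1}-q}{\lambda}+s=\delta-1$, i.e. $i\in T^{\perp}$. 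In the second and third cases $W$ has a nontrivial lower bound ($\lambda(s+1)$, resp. $q-1$), and I would compare the cyclic shifts of $(\underbrace{q-1,\ldots,q-1}_{t},[W])_{q}$ directly against the $q$-adic expansion of $\lambda(\delta'-1)$, where $\delta'$ is the right endpoint of the relevant $\delta$-interval, to conclude via Lemma \ref{lem1b21} that ${\rm CL}_{n}(n-i)\geq \delta'\geq \delta$ — the same type of circular-shift argument used inside the proof of Lemma \ref{lem:0916}. In the fourth case $I(\delta)=1$, so (II) is vacuous and only (I) is needed.

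The hard part will be the case analysis in (II) for the second and third cases: one has to verify that, for \emph{every} admissible $W$, the longest run of zeros in the cyclic word $(\underbrace{q-1,\ldots,q-1}_{t},[W])_{q}$ never produces a cyclic shift that falls below the target string. The delicate situation is when $W$ is at or near its lower bound, where a cyclic shift can actually equal the target; there one must use $\lambda\mid W$ together with the lower bound on $W$ to pin down the single digit on which the comparison turns. The boundary values of $t$ (notably $t=m-1$ in the fourth case, and the endpoints of the $t$-ranges) and of $s$ should need only separate routine checks.
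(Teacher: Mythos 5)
Your plan is essentially the paper's own proof: the same two-part structure (exhibit $n-I(\delta)$ in a coset $\mathbb{C}_j$ with $j\le\delta-1$, then bound ${\rm CL}(n-i)$ for $i<I(\delta)$), the same congruences identifying $\mathbb{C}_{n-i}$ with cosets of explicit integers modulo $q^m-1$, the same appeal to Lemma \ref{lem:0916} in the first case and to $q$-adic circular-shift comparisons via Lemmas \ref{lem1b21} and \ref{lem:0913} in the second and third. One small repair: in Cases 2 and 3 the comparison target should be $\lambda\delta'$ rather than $\lambda(\delta'-1)$ (the paper's targets $\lambda(s+1)q^t+q^t-1$ and $q^{t+1}-q+\lambda$ are exactly $\lambda\delta'$), since a bound ${\rm CL}_{q^m-1}(\lambda(n-i))\ge\lambda(\delta'-1)$ only yields ${\rm CL}_n(n-i)\ge\delta'-1\ge\delta-1$, which does not suffice; alternatively a \emph{strict} inequality against $\lambda(\delta'-1)$ combined with the divisibility of every coset element by $\lambda$ would close the gap.
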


\begin{proof}
By the definition of $T^{\perp}$, it is easy to check that $0 \in \ T^{\perp}$. In the following, we prove that $\{1, 2, \ldots, I(\delta)-1\} \subseteq T^{\perp}$ and $I(\delta)\notin  T^{\perp}$. There are four cases for discussion.

\noindent {\bf Case 1:} $\delta = \frac{q^{t+1}-q}{\lambda}+s+1$, where $1\leq s\leq \frac{q-1}{\lambda}-1$ and $0\leq t\leq m-2$. It is straightforward to see that
\[ n-\frac{q^{m-t}-1}{\lambda}+sq^{m-t-1} = \frac{q^{m-t-1}(q^{t+1}-q+\lambda s)}{\lambda}\in \mathbb{C}_{\frac{q^{t+1}-q}{\lambda}+s}\subseteq T.\]
Then $n-(n-\frac{q^{m-t}-1}{\lambda}+sq^{m-t-1})=\frac{q^{m-t}-1}{\lambda}-sq^{m-t-1} \in T^{-1}$ and $\frac{q^{m-t}-1}{\lambda}-sq^{m-t-1}\notin T^{\perp}$.

 For every integer $i$ with $1\leq i\leq \frac{q^{m-t}-1}{\lambda}-sq^{m-t-1}-1$, it is clear that $i$ can be expressed as $i=\frac{q^{m-t}-1}{\lambda}-sq^{m-t-1}-u$, where $1\leq u\leq  \frac{q^{m-t}-1}{\lambda}-sq^{m-t-1}-1$. Note that
\[\frac{q^{m-t-1}(\lambda uq^{t+1}+q^{t+1}-q+\lambda s)}{\lambda}\equiv n-\frac{q^{m-t}-1}{\lambda}+sq^{m-t-1}+u \pmod n,\]
then
\[\frac{\lambda uq^{t+1}+q^{t+1}-q+\lambda s}{\lambda}\in\mathbb{C}_{n-\frac{q^{m-t}-1}{\lambda}+sq^{m-t-1}+u}= \mathbb{C}_{n-i}.\]

Hence, from Lemma \ref{lem:0913} and Lemma \ref{lem:0916}, we know that
\[{\rm CL}(n-i)={\rm CL}\left(n-\frac{q^{m-t}-1}{\lambda}+sq^{m-t-1}+u\right) > \frac{q^{t+1}-q}{\lambda}+s\geq \delta-1,\]
where ${\rm CL}\left(n-\frac{q^{m-t}-1}{\lambda}+sq^{m-t-1}+u\right)$ denotes the coset leader of the $q$-cyclotomic coset modulo $n$ containing $n-\frac{q^{m-t}-1}{\lambda}+sq^{m-t-1}+u$. Consequently, $n-(\frac{q^{m-t}-1}{\lambda}-sq^{m-t-1}-u)\notin T$ and $\frac{q^{m-t}-1}{\lambda}-sq^{m-t-1}-u\notin T^{-1}$. This leads to $i=\frac{q^{m-t}-1}{\lambda}-sq^{m-t-1}-u\in T^{\perp}$.

Therefore, we obtain that $\{0, 1, 2, \ldots, \frac{q^{m-t}-1}{\lambda}-sq^{m-t-1}-1\}\subseteq T^{\perp}$ and $\frac{q^{m-t}-1}{\lambda}-sq^{m-t-1}\notin T^{\perp}$, i.e., $I(\delta)=\frac{q^{m-t}-1}{\lambda}-sq^{m-t-1}$.

\noindent {\bf Case 2:} $\frac{q^t-1}{\lambda}+sq^t < \delta \leq \frac{q^t-1}{\lambda}+(s+1)q^t$, where $0\leq s\leq \frac{q-1}{\lambda}-2$ and $1\leq t\leq m-1$. It is easy to see that
\[\frac{(\lambda s+1)q^m-q^{m-t}}{\lambda}
 \equiv n-\frac{q^{m-t}-1}{\lambda}+s \pmod n,\]
then
\[n-\frac{q^{m-t}-1}{\lambda}+s\in \mathbb{C}_{\frac{(\lambda s+1)q^t-1}{\lambda}}\subseteq T \]
since $\frac{(\lambda s+1)q^m-q^{m-t}}{\lambda}=\frac{q^{m-t}((\lambda s+1)q^t-1)}{\lambda}.$
Hence, $n-(n-\frac{q^{m-t}-1}{\lambda}+s)=\frac{q^{m-t}-1}{\lambda}-s\in T^{-1}$ and $\frac{q^{m-t}-1}{\lambda}-s\notin T^{\perp}$.

We next to show that $\{1, 2, \ldots, \frac{q^{m-t}-1}{\lambda}-s-1\}\subseteq T^{\perp}$. For every integer $i$ with $1\leq i\leq \frac{q^{m-t}-1}{\lambda}-s-1$, we can assume that $i=\frac{q^{m-t}-1}{\lambda}-s-u$, where $1\leq u\leq  \frac{q^{m-t}-1}{\lambda}-s-1$. Note that
\[\frac{q^{m-t}(\lambda(s+u)q^t+q^t-1)}{\lambda}\equiv \frac{q^m-q^{m-t}+\lambda(s+u)}{\lambda} \pmod n,\]
then
$\frac{\lambda(s+u)q^t+q^t-1}{\lambda}\in \mathbb{C}_{\frac{q^m-q^{m-t}+\lambda(s+u)}{\lambda}}=\mathbb{C}_{n-i}$.

By the definition of $s$ and $u$, we have $\lambda(s+u)q^t+q^t-1<q^m-1$. Then the $q$-adic expansion of $\lambda(s+u)q^t+q^t-1$ can be expressed as
\begin{equation}\label{eq:ambda}
\lambda(s+u)q^t+q^t-1=(\underbrace{0,\ldots,0,\lambda(s+u)}_{m-t}, \underbrace{q-1,\ldots,q-1}_{t})_q
\end{equation}
if $\lambda(s+u)<q$, and
\begin{equation}\label{eq:ambda1}
\lambda(s+u)q^t+q^t-1=(\underbrace{i_{m-1},i_{m-2},\ldots,i_{t}}_{m-t}, \underbrace{q-1,\ldots,q-1}_{t})_q
\end{equation}
if $\lambda(s+u)\geq q$, where $i_{j}\neq 0$ for some integers $t\leq j \leq m-1$.

Since $\lambda(s+1)\leq q-1$, it is easy to see that
$$\lambda(s+1)q^t+q^t-1=(\underbrace{0,\ldots,0,\lambda(s+1)}_{m-t}, \underbrace{q-1,\ldots,q-1}_{t})_q$$
and $\lambda(s+1)q^t+q^t-1$ is a $q$-cyclotomic coset leader modulo $q^m-1$.
From (\ref{eq:ambda}) and (\ref{eq:ambda1}) the coset leader of $\mathbb{C}_{\lambda(s+u)q^t+q^t-1}$ modulo $q^m-1$ is greater than or equal to ${\lambda(s+1)q^t+q^t-1}$.
Then, from Lemma \ref{lem:0913} we have
\[{\rm CL}(n-i)={\rm CL}\left(\frac{q^m-q^{m-t}+\lambda(s+u)}{\lambda}\right) \geq \frac{q^t-1}{\lambda}+(s+1)q^t> \delta-1.\]
Hence, $n-(\frac{q^{m-t}-1}{\lambda}-u-s)\notin T$ and $\frac{q^{m-t}-1}{\lambda}-u-s\notin T^{-1}$. This leads to $i=\frac{q^{m-t}-1}{\lambda}-u-s\in T^{\perp}$. Thus, we obtain $I(\delta)=\frac{q^{m-t}-1}{\lambda}-s$.

\noindent {\bf Case 3:} $\frac{q^{t+1}-1}{\lambda}-q^t < \delta \leq \frac{q^{t+1}-q}{\lambda}+1$, where $1\leq t\leq m-2$. Obviously, we have
\[n-\frac{q^{m-t}-q+\lambda}{\lambda}\equiv \frac{q^{m-t}(q^{t+1}-\lambda q^t-1)}{\lambda} \pmod n, \]
then
\[n-\frac{q^{m-t}-q+\lambda}{\lambda}\in \mathbb{C}_{\frac{q^{t+1}-\lambda q^t-1}{\lambda}}\subseteq T. \]
By definition, we obtain that $\frac{q^{m-t}-q+\lambda}{\lambda} \in T^{-1}$ and $\frac{q^{m-t}-q+\lambda}{\lambda} \notin T^{\perp}$.

In the following, we are going to show that $\{0, 1, 2, \ldots, \frac{q^{m-t}-q}{\lambda}\}\subseteq T^{\perp}$. For every integer $i$ with $1\leq i\leq \frac{q^{m-t}-q}{\lambda}$, we have $i=\frac{q^{m-t}-q}{\lambda}-u$, where $0\leq u\leq  \frac{q^{m-t}-q}{\lambda}-1$. Note that
\[\frac{q^{m-t}(q^{t+1}+\lambda uq^t-1)}{\lambda}\equiv n-\frac{q^{m-t}-q}{\lambda}+u \pmod n,\]
then $\frac{q^{t+1}+\lambda uq^{t}-1}{\lambda}\in \mathbb{C}_{n-\frac{q^{m-t}-q}{\lambda}+u}=\mathbb{C}_{n-i}$. In addition, we have
\[q^{t+1}+\lambda uq^{t}-1=(q+\lambda u-1)q^{t}+q^t-1= (\underbrace{i_{m-1},i_{m-2},\ldots,i_{t+1},i_{t}}_{m-t}, \underbrace{q-1,\ldots,q-1}_{t})_q.\]
It is clear that $i_t=q-1$ if $u=0$, and  there exist at least two integers $ i_{j_1}\neq 0$ and $i_{j_2}\neq 0$ if $u\neq 0$, where $t\leq j_1,\,\,j_2 \leq m-1$.

Since
\[q^{t+1}-q+\lambda= (\underbrace{0,0,\ldots,0,0}_{m-t-1}, \underbrace{q-1,q-1,\ldots,q-1}_{t},\lambda),\]
we know that the coset leader of $\mathbb{C}_{q^{t+1}+\lambda uq^{t}-1}$ modulo $q^m-1$ is greater than $q^{t+1}-q+\lambda$.
Then, from Lemma \ref{lem:0913} we have
\[{\rm CL}(n-i)={\rm CL}\left(\frac{q^m-1}{\lambda}-\frac{q^{m-t}-q}{\lambda}+u\right)> \frac{q^{t+1}-q+\lambda}{\lambda}> \delta-1.\]
Hence, $n-\frac{q^{m-t}-q}{\lambda}+u\notin T$ and $\frac{q^{m-t}-q}{\lambda}-u\notin T^{-1}$. This leads to $i=\frac{q^{m-t}-q}{\lambda}-u\in T^{\perp}$ for any $0\leq u\leq  \frac{q^{m-t}-q}{\lambda}-1$. Thus, we have $I(\delta)=\frac{q^{m-t}-q}{\lambda}+1$.

\noindent {\bf Case 4:} $\frac{q^m-1}{\lambda}-q^{m-1} < \delta \leq \delta_1$. It is easy to see that
\[\frac{q^m-1}{\lambda}-1\equiv \frac{q^{m-1}(q^m-\lambda q^{m-1}-1)}{\lambda} \pmod n,\]
then
\[\frac{q^m-1}{\lambda}-1 \in \mathbb{C}_{\frac{q^m-\lambda q^{m-1}-1}{\lambda}}\subseteq T.\]
Thus, $1= n-(\frac{q^m-1}{\lambda}-1)\in T^{-1}$ and $1\notin T^{\perp}$.
Note that $0\in T^{\perp}$, then we have $I(\delta)=1$.

Combining all the cases, the desired conclusion follows.
\end{proof}

From BCH bound and Lemma \ref{eq:lem919}, we have the following theorem on the lower bound on the minimum distance of $\mathcal{C}^{\perp}_{\delta}$.
\begin{theorem}\label{eq:thm9190}
Let $n=\frac{q^m-1}{\lambda}$, $\lambda \, | \, q-1$ and $\lambda\neq q-1$. Let $d^{\perp}(\delta)$ be the minimum distance of $\mathcal{C}^{\perp}_{\delta}$. Then we have
\begin{equation*}
\begin{aligned}
d^{\perp}(\delta)\geq
\begin{cases}
\frac{q^{m-t}+\lambda-1}{\lambda}-sq^{m-t-1}, &{\rm if} \,\,\,  \delta = \frac{q^{t+1}-q}{\lambda}+s+1\,\, (1\leq s\leq \frac{q-1}{\lambda}-1,\,\, 0\leq t\leq m-2),\\
\frac{q^{m-t}-1}{\lambda}-s+1, &{\rm if} \,\,\, \frac{q^t-1}{\lambda}+sq^t < \delta \leq \frac{q^t-1}{\lambda}+(s+1)q^t\,\,(0\leq s\leq \frac{q-1}{\lambda}-2,\\
&\,\,\,\,\,\,\, 1\leq t\leq m-1),\\
\frac{q^{m-t}-q}{\lambda}+2, &{\rm if} \,\,\, \frac{q^{t+1}-1}{\lambda}-q^t < \delta \leq \frac{q^{t+1}-q}{\lambda}+1\,\,(1\leq t\leq m-2),\\
2,&{\rm if} \,\, \frac{q^m-1}{\lambda}-q^{m-1} < \delta \leq n.
               \end{cases}
\end{aligned}
\end{equation*}
\end{theorem}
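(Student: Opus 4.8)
The plan is to obtain Theorem \ref{eq:thm9190} as an immediate corollary of Lemma \ref{eq:lem919} together with the BCH bound. Recall that $T^{\perp}=\mathbb{Z}_n\setminus T^{-1}$ is the defining set of $\mathcal{C}^{\perp}_{\delta}$ with respect to $\beta$ and that $0\in T^{\perp}$. Lemma \ref{eq:lem919} identifies, for each of the four stated ranges of $\delta$, an explicit integer $I(\delta)\geq 1$ with $\{0,1,2,\ldots,I(\delta)-1\}\subseteq T^{\perp}$ and $I(\delta)\notin T^{\perp}$; in particular $T^{\perp}$ contains the $I(\delta)$ consecutive residues $0,1,\ldots,I(\delta)-1$ modulo $n$.

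First I would invoke the BCH bound: since $\beta^{0},\beta^{1},\ldots,\beta^{I(\delta)-1}$ are all roots of the generator polynomial of $\mathcal{C}^{\perp}_{\delta}$, the defining set contains a run of $I(\delta)$ consecutive elements, so the minimum distance satisfies $d^{\perp}(\delta)\geq I(\delta)+1$. Then I would substitute the four values of $I(\delta)$ from Lemma \ref{eq:lem919} and simplify. The only nontrivial algebra is the identity
\begin{equation*}
\frac{q^{m-t}-1}{\lambda}-sq^{m-t-1}+1=\frac{q^{m-t}+\lambda-1}{\lambda}-sq^{m-t-1}
\end{equation*}
used in Case 1, whereas Cases 2, 3 and 4 yield $d^{\perp}(\delta)\geq \frac{q^{m-t}-1}{\lambda}-s+1$, $d^{\perp}(\delta)\geq \frac{q^{m-t}-q}{\lambda}+2$ and $d^{\perp}(\delta)\geq 2$ respectively, which are exactly the four branches of the theorem. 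The case distinction on $\delta$ is simply inherited from Lemma \ref{eq:lem919}.

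There is essentially no obstacle at the level of this theorem: all of the work has already been absorbed into Lemma \ref{eq:lem919}, whose proof in turn rests on the coset-leader estimate of Lemma \ref{lem:0916} and the reduction provided by Lemma \ref{lem:0913}. The hard part of the whole argument is thus not here but in establishing that $\{0,1,\ldots,I(\delta)-1\}\subseteq T^{\perp}$, i.e.\ in showing that $\mathrm{CL}(n-i)>\delta-1$ for all $i$ in the relevant range via the $q$-adic expansions analysed in Lemma \ref{lem:0916}. Given that, the present theorem follows in a few lines by the BCH bound and the elementary simplification above.
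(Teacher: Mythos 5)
Your proposal is correct and is essentially identical to the paper's own argument: the paper also derives the theorem directly from Lemma \ref{eq:lem919} and the BCH bound, with the four bounds being exactly $I(\delta)+1$ in each range. The simplification $\frac{q^{m-t}-1}{\lambda}-sq^{m-t-1}+1=\frac{q^{m-t}+\lambda-1}{\lambda}-sq^{m-t-1}$ and the observation that all the real work lives in Lemma \ref{eq:lem919} both match the paper.
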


\begin{remark}  In \cite[Theorem 18 and Theorem 23]{GDL21}, the authors gave the lower bound on the minimum distance of $\mathcal{C}^{\perp}_{\delta}$ for $\lambda=1$ and we showed their results in Lemma \ref{lem:8}. Theorem \ref{eq:thm9190} generalized their results from the case $\lambda=1$ to the case $\lambda\, |\, q-1$.
\end{remark}

The following examples show that the lower bounds in Theorem \ref{eq:thm9190} are good in some cases.

\begin{example}
Let $\delta=5$, $q=3$, $\lambda=1$ and $m=3$. In Theorem \ref{eq:thm9190}, the lower bound on the minimum distance of $\mathcal C_{5}^\bot$ is $9$. By Magma, the true minimum distance of $\mathcal C_{5}^\bot$ is $9$.
\end{example}

\begin{example}
Let $\delta=3$, $q=5$, $\lambda=1$ and $m=2$. In Theorem \ref{eq:thm9190}, the lower bound on the minimum distance of $\mathcal C_{3}^\bot$ is $15$.  By Magma, the true minimum distance of $\mathcal C_{3}^\bot$ is $16$.
\end{example}

To present the sufficient and necessary condition for $\mathcal{C}_{\delta}$ being a dually-BCH code, the following several lemmas will be needed later.  By the same way as \cite[Lemma 4.12]{Wang23}, we have the following results.

\begin{lemma} \label{proposition-(q^m-1)/(q-1)}
Let $\delta'$ be the coset leader of $\mathbb{C}_{n-\delta_1}$ modulo $n$, then
\begin{enumerate}
\item $\delta_1 \in T^\bot$ is a $q$-cyclotomic coset leader modulo $n$ if $2 \le \delta \le \delta'$.
\item $\delta' \in T^\bot$ is a $q$-cyclotomic coset leader modulo $n$ if $\delta' < \delta \le \delta_1$.
\end{enumerate}
\end{lemma}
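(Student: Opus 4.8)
The plan is to reduce everything to statements about $q$-cyclotomic coset leaders via the explicit descriptions $T=\mathbb{C}_1\cup\mathbb{C}_2\cup\cdots\cup\mathbb{C}_{\delta-1}$ and $T^{\perp}=\mathbb{Z}_n\setminus T^{-1}$. The two elementary facts I would isolate first are: (a) for $x\in\{1,\dots,n-1\}$ one has $x\in T$ if and only if ${\rm CL}(x)\le\delta-1$, since $T$ is precisely the union of the cyclotomic cosets with leaders $1,2,\dots,\delta-1$; and (b) for such $x$, $x\in T^{\perp}$ if and only if $n-x\notin T$, which is immediate from $T^{-1}=\{\,n-i:i\in T\,\}$ (note $n-x\in\{1,\dots,n-1\}$ as well, so there is no wrap-around to the class of $0$). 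I would also record that the map $x\mapsto n-x$ sends $\mathbb{C}_b$ bijectively onto $\mathbb{C}_{n-b}$, because $n-bq^{j}\equiv(n-b)q^{j}\pmod n$; consequently, since $\delta'$ is by definition the coset leader of $\mathbb{C}_{n-\delta_1}$, we get $\mathbb{C}_{\delta'}=\mathbb{C}_{n-\delta_1}$ and hence $\mathbb{C}_{n-\delta'}=\mathbb{C}_{\delta_1}$, so that ${\rm CL}(n-\delta_1)=\delta'$ and ${\rm CL}(n-\delta')=\delta_1$.

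With these observations in hand, part (1) is short: assume $2\le\delta\le\delta'$. Then ${\rm CL}(n-\delta_1)=\delta'\ge\delta>\delta-1$, so by (a) we have $n-\delta_1\notin T$, hence $\delta_1\notin T^{-1}$ and therefore $\delta_1\in T^{\perp}$ by (b); since $\delta_1$ is a $q$-cyclotomic coset leader modulo $n$ by hypothesis, the claim follows. Part (2) is symmetric: assume $\delta'<\delta\le\delta_1$. Then ${\rm CL}(n-\delta')=\delta_1\ge\delta>\delta-1$, so $n-\delta'\notin T$, hence $\delta'\notin T^{-1}$ and $\delta'\in T^{\perp}$; and $\delta'$ is a coset leader modulo $n$ by its very definition.

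I do not expect a genuine obstacle here. The only points needing a line of care are that $\delta_1$ and $\delta'$ really lie in $\{1,\dots,n-1\}$, so that all the "$n-(\cdot)$" manipulations stay within that range and avoid the class of $0$: this holds because $\delta_1$ is a coset leader with $1\le\delta_1<n$, whence $1\le n-\delta_1\le n-1$ and therefore $1\le\delta'\le\delta_1<n$. Everything else is bookkeeping with the inclusion $T=\mathbb{C}_1\cup\cdots\cup\mathbb{C}_{\delta-1}$, following the same mechanism as in \cite[Lemma 4.12]{Wang23}.
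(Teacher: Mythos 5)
Your proof is correct, and it follows essentially the same mechanism the paper invokes (the paper gives no independent argument, deferring to \cite[Lemma 4.12]{Wang23}, whose proof is exactly this reduction via $x\in T \Leftrightarrow {\rm CL}(x)\le\delta-1$ and the coset identity $\mathbb{C}_{n-\delta'}=\mathbb{C}_{\delta_1}$). Nothing further is needed.
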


\begin{lemma}\label{eq:0918}
Let $q>3$, $1<\lambda<q-1$, $\lambda\, |\,q-1$ and $n=\frac{q^m-1}{\lambda}$. Then the following statements hold.
\begin{itemize}

\item  $\frac{q^{m}-((\lambda-1)q+1)q^{m-2}-1}{\lambda}\in T^{\perp}$ is a $q$-cyclotomic coset leader modulo $n$ if  $2\leq \delta \leq \frac {q-1}{\lambda}+1$.

\item  Let $0\leq b<\lambda$ satisfy $m+b\equiv 0 \pmod\lambda$. Then $\frac{q^{m-1}+\cdots+q^b+2q^{b-1}+\cdots+2}{\lambda}\in T^{\perp}$ is a $q$-cyclotomic coset leader modulo $n$ if $\frac{q-1}{\lambda} +1< \delta \leq \frac{q^{m-1}-1}{\lambda}$.

\item   $\frac{q-1+\lambda}{\lambda}\in T^{\perp}$ is a  $q$-cyclotomic coset leader modulo $n$ if $\frac {q^{m-1}-1}{\lambda} <\delta\leq n-q^{m-1}$.

\end{itemize}
\end{lemma}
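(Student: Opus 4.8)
The plan is to handle the three bullets separately, since each asserts that a specific integer lies in $T^{\perp}$ and is a $q$-cyclotomic coset leader modulo $n$. For each, I would verify two things: (i) that the candidate $x/\lambda$ is a $q$-cyclotomic coset leader modulo $n$, and (ii) that $x/\lambda \in T^{\perp}$, equivalently that $n - x/\lambda \notin T^{-1}$, equivalently that ${\rm CL}(n - x/\lambda) > \delta - 1$ in the relevant range of $\delta$. Throughout I would work with $q$-adic expansions of length $m$ and reduce modulo $q^m - 1$, then use Lemma \ref{lem:0913} to pass between coset leaders modulo $q^m-1$ and modulo $n = \frac{q^m-1}{\lambda}$; this requires checking the divisibility $\lambda \mid x$ in each case, which follows from the stated congruence conditions on $m$ and $b$ (and for the first and third bullets is immediate).

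For the first bullet, the candidate corresponds to $x = q^m - ((\lambda-1)q+1)q^{m-2} - 1$. I would write out its $q$-adic expansion — roughly $(\,q-\lambda,\, q-2,\, q-1,\ldots,q-1\,)_q$ with the exact leading coordinates determined by $(\lambda-1)q+1$ — and check directly via Lemma \ref{lem1b21} that no circular shift decreases it, so it is a coset leader modulo $q^m-1$; divisibility by $\lambda$ is clear since $x = (q^{m}-1) - ((\lambda-1)q+1)q^{m-2}$ and $\lambda \mid q^m - 1$ while $(\lambda-1)q + 1 \equiv 0 \pmod \lambda$. Then I would compute $n - x/\lambda = \frac{((\lambda-1)q+1)q^{m-2}}{\lambda}$, find the coset leader of $\mathbb{C}_{((\lambda-1)q+1)q^{m-2}}$ modulo $q^m-1$ by rotating the nonzero block to the low-order end, and check that dividing by $\lambda$ gives something exceeding $\frac{q-1}{\lambda}$, hence $> \delta - 1$ for $\delta \le \frac{q-1}{\lambda}+1$. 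The second and third bullets are analogous: for the second, the congruence $m + b \equiv 0 \pmod \lambda$ is exactly what makes the "repunit-like" numerator $q^{m-1}+\cdots+q^b+2q^{b-1}+\cdots+2$ divisible by $\lambda$ (the digit pattern $1$ repeated $m-b$ times followed by $2$ repeated $b$ times has digit-sum $m - b + 2b = m + b \equiv 0$, and for base-$q$ residues mod $\lambda \mid q-1$ the value is congruent to the digit sum); for the third, $q - 1 + \lambda$ is divisible by $\lambda$ since $\lambda \mid q-1$, and one checks $\frac{q-1+\lambda}{\lambda}$ is a small coset leader whose complement $n - \frac{q-1+\lambda}{\lambda}$ has a large coset leader, namely around $\frac{q^{m-1}-1}{\lambda}+1 > \delta - 1$ in the range $\frac{q^{m-1}-1}{\lambda} < \delta \le n - q^{m-1}$.

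The main obstacle I anticipate is the bookkeeping in the second bullet: the numerator $q^{m-1}+\cdots+q^b+2q^{b-1}+\cdots+2$ has a genuinely $b$-dependent digit pattern, so both the coset-leader check for $x/\lambda$ and the computation of ${\rm CL}(n - x/\lambda)$ must be carried out carefully as functions of $b$, and one must confirm that the resulting lower bound on ${\rm CL}(n - x/\lambda)$ is at least $\frac{q^{m-1}-1}{\lambda}$ uniformly across the stated range $\frac{q-1}{\lambda}+1 < \delta \le \frac{q^{m-1}-1}{\lambda}$. I would mirror the argument of \cite[Lemma 4.12]{Wang23} step by step, adapting the digit computations from the $\lambda = q-1$ (i.e. $n = \frac{q^m-1}{q-1}$) case to general $\lambda \mid q-1$; the structure of the proof is identical, only the explicit $q$-adic expansions change. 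The hypothesis $q > 3$ and $1 < \lambda < q-1$ guarantees there is room for the digit $2$ and for the "gap" digits $q-2$ to be genuinely distinct from $0$ and $q-1$, which is what drives the coset-leader verifications via Lemma \ref{lem1b21}.
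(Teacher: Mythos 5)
Your overall strategy coincides with the paper's: for each candidate $x/\lambda$ you verify coset-leadership modulo $q^m-1$ via Lemma \ref{lem1b21}, transfer to modulus $n$ via Lemma \ref{lem:0913} after checking $\lambda\mid x$, and then establish $x/\lambda\in T^{\perp}$ by showing ${\rm CL}(n-x/\lambda)>\delta-1$. Your first two bullets follow the paper's route (your digit-sum argument for $\lambda\mid q^{m-1}+\cdots+q^b+2q^{b-1}+\cdots+2$ is the same computation the paper writes telescopically), though for the second bullet you only promise the bookkeeping; the paper completes it by noting that the complement has digit string $(\underbrace{q-2,\ldots,q-2}_{m-b},\underbrace{q-3,\ldots,q-3}_{b})_q$, whose coset leader $(\underbrace{q-3,\ldots,q-3}_{b},\underbrace{q-2,\ldots,q-2}_{m-b})_q$ exceeds $q^{m-1}-1$ precisely because $q>3$.

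The genuine gap is in your third bullet. You claim ${\rm CL}\bigl(n-\frac{q-1+\lambda}{\lambda}\bigr)$ is ``around $\frac{q^{m-1}-1}{\lambda}+1$'' and that this exceeds $\delta-1$ throughout $\frac{q^{m-1}-1}{\lambda}<\delta\le n-q^{m-1}$. But $n-q^{m-1}=\frac{(q-\lambda)q^{m-1}-1}{\lambda}\ge\frac{2q^{m-1}-1}{\lambda}$ because $\lambda<q-1$, so for $\delta$ in the upper part of this range $\frac{q^{m-1}-1}{\lambda}+1$ is far smaller than $\delta-1$ and the comparison fails; as written, your argument cannot conclude $n-\frac{q-1+\lambda}{\lambda}\notin T$. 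What is needed (and what the paper computes) is the exact coset leader of $q^m-q-\lambda=(q-1,\ldots,q-1,q-2,q-\lambda)_q$ modulo $q^m-1$, namely $q^m-(\lambda-1)q^{m-1}-2$ for $\lambda>2$ and $q^m-q^{m-1}-q^{m-2}-1$ for $\lambda=2$; after dividing by $\lambda$ this is roughly $\frac{(q-\lambda+1)q^{m-1}}{\lambda}$, which does dominate $n-q^{m-1}-1\approx\frac{(q-\lambda)q^{m-1}}{\lambda}$ (the margin is about $q^{m-1}/\lambda$). The statement is therefore still true, but you must replace your estimate with this exact, much larger coset leader to cover the whole range of $\delta$.
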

\begin{proof}
We proof these results from the following cases.

\noindent {\bf Case 1:} $2\leq \delta \leq \frac {q-1}{\lambda}+1$. It is easy to check that
$$\lambda \, | \, \left( q^m-1-((\lambda-1)q+1)q^{m-2}\right). $$
Obviously, the $q$-adic expansion of $q^m-1-((\lambda-1)q+1)q^{m-2}$ can be expressed as
$$q^m-1-((\lambda-1)q+1)q^{m-2}=(q-\lambda,q- 2,\underbrace{q-1,\ldots,q-1}_{m-2})_q$$
and
$q^m-1-((\lambda-1)q+1)q^{m-2}$ is a $q$-cyclotomic coset leader modulo $q^m-1$ since $2\leq\lambda<q-1$. Then from Lemma \ref{lem:0913},
 $\frac{q^m-1-((\lambda-1)q+1)q^{m-2}}{\lambda}$ is a $q$-cyclotomic coset leader modulo $n$.

It is obvious that
\begin{equation*}
\begin{split}
{\rm CL}\left(n-\frac{q^m-1-((\lambda-1)q+1)q^{m-2}}{\lambda}\right) & ={\rm CL}\left(\frac {(\lambda-1)q+1}{\lambda}\right)>\delta-1\\
\end{split}
\end{equation*}
since $\delta<q-1$.
It then follows that $n-\frac{q^m-1-((\lambda-1)q+1)q^{m-2}}{\lambda}\notin T$ and $\frac{q^m-1-((\lambda-1)q+1])q^{m-2}}{\lambda}\notin T^{-1}$.
This lead to $\frac{q^m-1-((\lambda-1)q+1)q^{m-2}}{\lambda}\in T^{\perp}$.

\noindent {\bf Case 2:} $\frac{q-1}{\lambda} +1< \delta \leq \frac{q^{m-1}-1}{\lambda}$. It is obvious that $$q^{m-1}+\cdots+q^b+2q^{b-1}+\cdots+2=(\underbrace{1,1,\ldots,1,1}_{m-b},\underbrace{2,2,\ldots,2,2}_{b})_q$$ is a  $q$-cyclotomic  coset leader modulo $q^m-1$.
By the definition of $\lambda$, we know that $\lambda \, | \, q-1$.  Since
$$q^{m-1}+\cdots+q^b+2q^{b-1}+\cdots+2=(q^{m-1}-1)+\cdots+(q^b-1)+2(q^{b-1}-1)+\cdots+2(q-1)+\lambda a,$$ we know that $\lambda \, | \, q^{m-1}+\cdots+q^b+2q^{b-1}+\cdots+2.$
According to Lemma \ref{lem:0913}, $\frac{q^{m-1}+\cdots+q^b+2q^{b-1}+\cdots+2}{\lambda}$ is a $q$-cyclotomic coset leader modulo $n$.

In the following, we prove that $\frac{q^{m-1}+\cdots+q^b+2q^{b-1}+\cdots+2}{\lambda} \in T^{\perp}$.
Note that
$$ q^{m}-q^{m-1}-\cdots-q^b-2q^{b-1}-\cdots-3=(\underbrace{q-2,\ldots,q-2}_{m-b},\underbrace{q-3,\ldots,q-3}_{b})_q. $$
Then, the coset leader of $\mathbb{C}_{q^{m}-q^{m-1}-\cdots-q^b-2q^{b-1}-\cdots-3}$ is
$$q^m-2q^{m-1}-\cdots-2q^{m-b}-q^{m-b-1}-\cdots-q-2=(\underbrace{q-3,\ldots,q-3}_{b},\underbrace{q-2,\ldots,q-2}_{m-b})_q. $$
Hence, we have
\begin{equation*}
\begin{split}
{\rm CL}\left(n-\frac{q^{m-1}+\cdots+q^b+2q^{b-1}+\cdots+2}{\lambda}\right) & ={\rm CL}\left(\frac {q^{m}-q^{m-1}-\cdots-q^b-2q^{b-1}-\cdots-3}{\lambda}\right)\\
&=\frac {q^m-2q^{m-1}-\cdots-2q^{m-b}-q^{m-b-1}-\cdots-q-2}{\lambda}\\
& > \frac{q^{m-1}-1}{\lambda} > \delta-1.
\end{split}
\end{equation*}
Consequently, $n-\frac{q^{m-1}+\cdots+q^b+2q^{b-1}+\cdots+2}{\lambda}\notin T$ and $\frac{q^{m-1}+\cdots+q^b+2q^{b-1}+\cdots+2}{\lambda}\notin T^{-1}$.
This lead to $\frac{q^{m-1}+\cdots+q^b+2q^{b-1}+\cdots+2}{\lambda}\in T^{\perp}$.

\noindent {\bf Case 3:} $\frac {q^{m-1}-1}{\lambda} <\delta\leq n-q^{m-1}$. It is easy to check that $\frac {q-1+\lambda}{\lambda}$ is a $q$-cyclotomic coset leader mod $n$ due to $q-1+\lambda$ is a  $q$-cyclotomic coset leader mod $q^m-1$. Note that
$$q^m-q-\lambda =(q-1,q-1,\ldots,q-2,q-\lambda).$$
Then, we have
$${\rm CL}(q^m-q-\lambda)=q^m-(\lambda-1)q^{m-1}-2=(q-\lambda,q-1,\ldots,q-1,q-2)$$
if $\lambda>2$ and
$${\rm CL}(q^m-q-\lambda)=q^m-q^{m-1}-q^{m-2}-1=(q-2,q-2,\ldots,q-1,q-1)$$
if $\lambda=2$.
With an analysis similar as Case $1$ and Case $2$, it then follows that $n-\frac{q-1+\lambda}{\lambda}\notin T$ and $\frac {q-1+\lambda}{\lambda}\in T^{\perp}$.
This completes the proof.
\end{proof}

The following theorem gives a sufficient and necessary condition for $\mathcal{C}_{\delta}$ being a dually-BCH code, where $2\leq \delta\leq n$.

\begin{theorem}\label{thm-04}
Let $m\geq 2$, $q\geq 3$, $\lambda \, | \, q-1$ and $\lambda\neq q-1$. Let $n=\frac{q^m-1}{\lambda}$ be the length of $\mathcal{C}_{\delta}$. Then $\mathcal{C}_{\delta}$ is a dually-BCH code if and only if
$$\delta_1< \delta \leq n$$
if $\lambda\neq 1$, and
$$\delta=2, or\,\,\,\, \delta_2< \delta \leq n$$
if $\lambda=1$, where $\delta_1$ and $\delta_2$ are the first and the second largest $q$-cyclotomic coset leader modulo $n$, respectively.
\end{theorem}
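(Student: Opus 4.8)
The plan is to reduce the dually-BCH property to a combinatorial condition on the defining set $T^{\perp}$ and then settle the two directions separately. Since $\mathcal{C}_{\delta}$ is by construction a narrow-sense BCH code, it is always a BCH code, so $\mathcal{C}_{\delta}$ is a dually-BCH code with respect to $\beta$ if and only if $\mathcal{C}^{\perp}_{\delta}$ is a BCH code with respect to $\beta$. From $1\in T$ and $0\notin T$ we get $0\in T^{\perp}$ and $n-1\notin T^{\perp}$; using this one checks that if $T^{\perp}$ is the defining set of a BCH code then its ``interval'' of cosets must start at $0$, i.e. $T^{\perp}=\mathbb{C}_0\cup\mathbb{C}_1\cup\cdots\cup\mathbb{C}_{J-1}$ for some $J\geq 1$. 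I will first observe that the integer $I(\delta)$ of Lemma \ref{eq:lem919} is itself always a $q$-cyclotomic coset leader and that this identity holds precisely when $T^{\perp}$ contains no $q$-cyclotomic coset leader larger than $I(\delta)$; this ``no large coset leader in $T^{\perp}$'' statement is the criterion to be verified.

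For sufficiency, if $\delta_1<\delta\leq n$ then every nonzero residue modulo $n$ has its coset leader in $\{1,\ldots,\delta-1\}$, so $T=\mathbb{Z}_n\setminus\{0\}$, hence $T^{-1}=\mathbb{Z}_n\setminus\{0\}$ and $T^{\perp}=\{0\}$, which satisfies the criterion vacuously. When $\lambda=1$ there is the extra value $\delta=2$ together with the range $\delta_2<\delta\leq\delta_1$: here a direct computation of $T^{-1}$ (using ${\rm CL}(n-1)=\delta_1$ in the first case and ${\rm CL}(n-\delta_1)=1$ in the second) shows $T^{\perp}=\mathbb{C}_0\cup\cdots\cup\mathbb{C}_{\delta_2}$, respectively $T^{\perp}=\mathbb{C}_0\cup\mathbb{C}_1$, both of the required shape; this $\lambda=1$ case is anyway covered by \cite{GDL21}.

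For necessity I fix $2\leq\delta\leq\delta_1$, assume $1<\lambda<q-1$ (the case $\lambda=1$ being quoted from \cite{GDL21}), and exhibit a $q$-cyclotomic coset leader in $T^{\perp}$ exceeding $I(\delta)$; by the criterion, $\mathcal{C}^{\perp}_{\delta}$ is then not a BCH code with respect to $\beta$. I partition $[2,\delta_1]$ into the ranges governed by Lemmas \ref{eq:lem919}, \ref{eq:0918} and \ref{proposition-(q^m-1)/(q-1)}: for $2\leq\delta\leq\frac{q-1}{\lambda}+1$ use the coset leader $\frac{q^{m}-((\lambda-1)q+1)q^{m-2}-1}{\lambda}\in T^{\perp}$; for $\frac{q-1}{\lambda}+1<\delta\leq\frac{q^{m-1}-1}{\lambda}$ use $\frac{q^{m-1}+\cdots+q^{b}+2q^{b-1}+\cdots+2}{\lambda}\in T^{\perp}$ (with $b$ as in Lemma \ref{eq:0918}); for $\frac{q^{m-1}-1}{\lambda}<\delta\leq n-q^{m-1}$ use $\frac{q-1+\lambda}{\lambda}\in T^{\perp}$; and for $n-q^{m-1}<\delta\leq\delta_1$, where Lemma \ref{eq:lem919} gives $I(\delta)=1$, either of the coset leaders $\delta_1,\delta'$ furnished by Lemma \ref{proposition-(q^m-1)/(q-1)} already exceeds $1$. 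In the first three ranges the relevant values of $I(\delta)$ come from Lemma \ref{eq:lem919}, and the strict inequality is obtained by comparing the $q$-adic expansions of the exhibited coset leader and of $I(\delta)$, exactly as in the proofs of the cited lemmas.

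The main obstacle is this necessity direction: one must line up the four parameter ranges of $I(\delta)$ in Lemma \ref{eq:lem919} with the three cases of Lemma \ref{eq:0918} and the two alternatives of Lemma \ref{proposition-(q^m-1)/(q-1)}, verify that these sub-ranges genuinely exhaust $[2,\delta_1]$ — paying attention to the degenerate behaviour when $m=2$ or when $\frac{q-1}{\lambda}$ is small, and to the fact that the Case-3 and Case-1 ranges of Lemma \ref{eq:lem919} with $t$ near $m$ sit strictly below $\frac{q^{m-1}-1}{\lambda}$ — and then carry out, piece by piece, the $q$-adic comparison showing the exhibited coset leader beats $I(\delta)$. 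Once the $I(\delta)$-criterion is in place, the reduction step and the sufficiency step are essentially bookkeeping.
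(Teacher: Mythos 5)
Your proposal follows essentially the same route as the paper's proof: reduce the dually-BCH property to $T^{\perp}=\mathbb{C}_0\cup\cdots\cup\mathbb{C}_{J-1}$, get sufficiency from $T^{\perp}=\{0\}$ when $\delta>\delta_1$, and for necessity split $[2,\delta_1]$ into the same four ranges, using Lemma \ref{eq:lem919} for $I(\delta)$ and Lemmas \ref{eq:0918} and \ref{proposition-(q^m-1)/(q-1)} to exhibit a coset leader in $T^{\perp}$ exceeding $I(\delta)$. The argument is correct (your explicit iff-criterion in terms of coset leaders versus $I(\delta)$ is a slightly cleaner packaging of what the paper does with $I_{\max}$), so no substantive differences to report.
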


\begin{proof}
If $\lambda=1$, the results have been given in \cite[Theorem 22]{GDL21}. If $q=3$, by the definition of $\lambda$, we know that $\lambda=1$ since $\lambda\neq q-1$. Hence, in the following, we always assume that $\lambda\neq1$ and $q>3$.

It is clear that $0\notin T$ and $1\in T$, so $0 \notin T^{-1}$ and $n-1 \in T^{-1}$. Furthermore, we have $0 \in T^{\perp}$ and $n-1 \notin T^{\perp}$. So, $\mathbb{C}_0$ must be the initial cyclotomic coset of $T^{\perp}$. In other words, there must exist an integer $J\geq 1$ such that $T^{\perp}=\mathbb{C}_0\cup \mathbb{C}_1\cup \mathbb{C}_2 \cup \cdots \cup \mathbb{C}_{J-1}$ if $\mathcal{C}_{\delta}$ is a dually-BCH code.

If $\delta_1< \delta \leq n$, it is easy to see that $T^{\perp}=\{0\}$ and $\mathcal{C^{\perp}_{\delta}}$ is a BCH code with respect to $\beta$.
It remains to show that $\mathcal{C^{\perp}_{\delta}}$ is not a dually-BCH code with respect to $\beta$ when $2\leq \delta\ \leq \delta_1$ for $m\geq 2$. To this end, we show that there does not exist an integer $J\geq 1$ such that $T^{\perp}=\mathbb{C}_0\cup \mathbb{C}_1\cup \mathbb{C}_2 \cup \cdots \cup \mathbb{C}_{J-1}$. We have the following four cases for discussion.

\noindent {\bf Case 1:}  $2\leq \delta \leq \frac {q-1}{\lambda}+1$. From Lemma \ref{eq:0918}, we know that $\frac{q^{m}-((\lambda-1)q+1)q^{m-2}-1}{\lambda}\in T^{\perp}$ is a coset leader modulo $n$. It then follows from Lemma \ref{eq:lem919} that
$$I_{max}\,:=max\left\{I(\delta)\,:\,2\leq \delta \leq \frac {q-1}{\lambda}+1\right\}=I(2)=\frac{q^m-1}{\lambda}-q^{m-1}.$$

It is clear that $\frac{q^{m}-((\lambda-1)q+1)q^{m-2}-1}{\lambda} > \frac{q^{m}-1}{\lambda}-q^{m-1}$. Thus, there in no an integer $J\geq 1$ such that $T^{\perp}=\mathbb{C}_0\cup \mathbb{C}_1\cup \mathbb{C}_2 \cup \cdots \cup \mathbb{C}_{J-1}$, i.e., $\mathcal{C^{\perp}_{\delta}}$ is not a BCH code with respect to $\beta.$

\noindent {\bf Case 2:} $\frac {q-1}{\lambda}+1 <\delta\leq \frac{q^{m-1}-1}{\lambda}$. With an analysis similar as in Case 1, from Lemma \ref{eq:lem919} and Lemma \ref{eq:0918}, we obtain that $\mathcal{C^{\perp}_{\delta}}$ is not a BCH code with respect to $\beta.$

\noindent {\bf Case 3:} $\frac {q^{m-1}-1}{\lambda} <\delta\leq \frac{q^{m}-1}{\lambda}-q^{m-1}$. With discussions similar to Case $1$, we can obtain the same results as above.

\noindent {\bf Case 4:} $n-q^{m-1} <\delta\leq \delta_1$. From Lemma \ref{eq:lem919}, we know that
$$I_{max}\,:=max\left\{I(\delta)\,:\, n-q^{m-1} <\delta\leq \delta_1\right\}=1.$$
It is easy to see that $$q^m-\lambda q^{m-1}-1=(q-\lambda-1, q-1,\ldots,q-1)_q$$
is a $q$-cyclotomic coset leader modulo $q^m-1$. Then from Lemma \ref{lem:0913}, $n-q^{m-1}$ is a $q$-cyclotomic coset leader modulo $n$. Hence,
$${\rm CL}(n-\delta_1)\leq n-\delta_1\leq n-(n-q^{m-1})<n-q^{m-1}$$
since $\lambda \, | \, q-1$ and $\lambda\neq q-1$.
From Lemma \ref{proposition-(q^m-1)/(q-1)}, there must exist a $q$-cyclotomic coset leader modulo $n$ in $T^{\perp}$, which is greater than $1$. It then follows that there is not integer $J\geq 1$ such that $T^{\perp}=\mathbb{C}_0\cup \mathbb{C}_1\cup \mathbb{C}_2 \cup \cdots \cup \mathbb{C}_{J-1}$, i.e., $\mathcal{C^{\perp}_{\delta}}$ is not a BCH code with respect to $\beta.$

Combining all the cases, the desired conclusion follows.
\end{proof}
\begin{example}
Let $q=5$, $\lambda=2$ and $m=4$. By Magma, $\delta_1=247$  and $\mathcal{C}_\delta$ is a dually-BCH code if and only if $247 < \delta\leq 312$, where $\delta_1$ is the first largest  $q$-cyclotomic coset leader modulo $312$.  This experimental result coincides with Theorem \ref{thm-04}.
\end{example}

\begin{example}
Let $q=7$, $\lambda=3$ and $m=3$. By Magma, $\delta_1=95$ and $\mathcal{C}_\delta$ is a dually-BCH code if and only if $95 < \delta\leq 114$, where $\delta_1$ is the first largest  $q$-cyclotomic coset leader modulo $114$. This experimental result coincides with Theorem \ref{thm-04}.
\end{example}

\begin{remark}
It seems that determining the values of the first and second largest $q$-cyclotomic coset leaders modulo $\frac{q^m-1}{\lambda}$ is a very hard problem. For some special $\lambda$, the first few largest $q$-cyclotomic coset leaders have been shown in Lemmas \ref{lem:qm1q1}, \ref{lem:qm1q101} and \ref{lem:qm1q102}.
\end{remark}

\section{Conclusion}\label{sec-finals}

 In \cite{GDL21} and \cite{Wang23}, the sufficient and necessary condition for $\mathcal{C}_{\delta}$ being a dually-BCH code and the lower bound on the minimum distance of its dual were developed, where the length of $\mathcal{C}_{\delta}$ is $q^m-1$ or $\frac{q^m-1}{q-1}$.
In this paper, we considered  BCH code $\mathcal{C}_{\delta}$ with length $\frac{q^m-1}{\lambda}$, where $\lambda$ is a positive integer satisfying $\lambda\, |\, q-1$, or $\lambda=q^s-1$ and $s\, |\,m$.
The main contributions of this paper are the following:
\begin{itemize}
\item The sufficient and necessary condition in terms of the designed distance was presented to ensure that $\mathcal{C}_{\delta}$ is a dually-BCH code, where $2\leq \delta\leq n$ (see Theorem \ref{thm-02} and
Theorem~\ref{thm-04}).
\item  Some lower bounds on the minimum distances of the duals of BCH codes were developed (see Theorem \ref{thm:1002} and Theorem \ref{eq:thm9190}). Since we extended the results in \cite{GDL21},  for binary primitive
narrow-sense BCH codes, our bounds on the minimum distances of their duals improve the classical Sidel'nikov bound, and are also better than the Carlitz-Uchiyama bound for a large designed distance $\delta$.
\end{itemize}

\end{document}